\newcommand{\E}{\mathbb{E}}
\newcommand{\ind}{\mathbf{1}}
\renewcommand{\Pr}{\mathbb{P}}
\DeclareMathOperator{\tr}{tr}
\DeclareMathOperator{\argmin}{argmin}
\newtheorem{theorem}{Theorem}
\newtheorem{lemma}{Lemma}
\newtheorem{definition}{Definition}
\begin{document}
\title{Learning interactions through hierarchical group-lasso regularization}

\author{Michael Lim \and Trevor Hastie}

\newcommand{\fix}{\marginpar{FIX}}
\newcommand{\new}{\marginpar{NEW}}

% \nipsfinalcopy % Uncomment for camera-ready version

\maketitle

\begin{abstract}
  We introduce a method for learning pairwise interactions in a manner that satisfies strong hierarchy: whenever an interaction is estimated to be nonzero, both its associated main effects are also included in the model. We motivate our approach by modeling pairwise interactions for categorical variables with arbitrary numbers of levels, and then show how we can accommodate continuous variables and mixtures thereof. Our approach allows us to dispense with explicitly applying constraints on the main effects and interactions for identifiability, which results in interpretable interaction models. We compare our method with existing approaches on both simulated and real data, including a genome wide association study, all using our R package \textsc{glinternet}.
\end{abstract}

\begin{section}{Introduction}

  Given an observed response and explanatory variables, we expect interactions to be present if the response cannot be explained by additive functions of the variables. The following definition makes this more precise.
  \begin{definition}
    \label{def:interactions}
    When a function $f(x,y)$ cannot be expressed as $g(x)+h(y)$ for some functions $g$ and $h$, we say that there is an interaction in $f$ between $x$ and $y$.
  \end{definition}
  
  Interactions of single nucleotide polymorphisms (SNPs) are thought to play a role in cancer \cite{Schwender:2008:Biostatistics} and other diseases. Modeling interactions has also served the recommender systems community well: latent factor models (matrix factorization) aim to capture user-item interactions that measure a user's affinity for a particular item, and are the state of the art in predictive power \cite{Koren:2009:KDD}. In lookalike-selection, a problem that is of interest in computational advertising, one looks for features that most separates a group of observations from its complement, and it is conceivable that interactions among the features can play an important role.

  There are many challenges, the first of which is a problem of scalability. Even with 10,000 variables, we are already looking at a $50\times10^6$-dimensional space of possible interaction pairs. Complicating the matter are spurious correlations amongst the variables, which makes learning even harder. Finally, in some applications, sample sizes are relatively small and the signal to noise ratio is low. For example, genome wide association studies (GWAS) can involve hundreds of thousands or millions of variables, but only several thousand observations. Since the number of interactions is on the order of the square of the number of variables, computational considerations quickly become an issue.

  Finding interactions is an example of the ``$p > n$'' problem where there are more features or variables than observations. A popular approach in supervised learning problems of this type is to use regularization, such as adding a squared $L_2$ penalty of the form $\|\beta\|_2^2$ or a $L_1$ penalty of the form $\|\beta\|_1$ to the coefficients. The latter type of penalty has been the focus of much research since its introduction in \cite{Tibshirani:1996:JRSS}, and is called the lasso. One of the reasons for the lasso's popularity is that it does variable selection: it sets some coefficients exactly to zero. There is a group analogue to the lasso, called the group-lasso \cite{Yuan:2006:JRSS}, that sets groups of variables to zero. The idea behind our method is to set up main effects and interactions (to be defined later) as a group of variables, and then we perform selection via the group-lasso. 

  Discovering interactions is an area of active research; see, for example, \cite{Bien:2013:AnnalsStat} and \cite{Chen:2011:CompBio}. In this paper, we introduce \textsc{glinternet}, a method for learning first-order interactions that can be applied to categorical variables with arbitrary numbers of levels, continuous variables, and combinations of the two. Our approach consists of two phases: a screening stage (for large problems) that gives a candidate set of main effects and interactions, followed by variable selection on the candidate set with the group-lasso. We introduce two screening procedures, the first of which is inspired by our observation that boosting with depth-2 trees naturally gives rise to an interaction selection process that enforces hierarchy: an interaction cannot be chosen until a split has been made on its associated main effect. The second method is an adaptive procedure that is based on the strong rules \cite{Tibshirani:2012:JRSS} for discarding predictors in lasso-type problems. We show in Section \ref{sec:methodology} how the group-lasso penalty naturally enforces strong hierarchy in the resulting solutions.

  We can now give an overview of our method:
  \begin{enumerate}
  \item If required, screen the variables to get a candidate set
    $\mathcal{C}$ of interactions and their associated main effects. Otherwise, take $\mathcal{C}$ to consist of all main effects and pairwise interactions. 
  \item Fit a group-lasso on $\mathcal{C}$ with a grid of values for the
    regularization parameter. Start with $\lambda=\lambda_{max}$ for which all
    estimates are zero. As we decrease $\lambda$, we allow more terms
    to enter the model, and we stop once a user-specified number of
    interactions have been discovered. Alternatively, we can choose $\lambda$ using any model selection technique such as cross validation.
  \end{enumerate}
  
  \begin{subsection}{A simulated example}\label{sec:simulated_example}
    As a first example, we perform 100 simulations with 500 3-level categorical variables and 800 quantitative observations. There are 10 main effects and 10 interactions in the ground truth, and the noise level is chosen to give a signal to noise ratio of one. We run \textsc{glinternet} without any screening, and stop after ten interactions have been found. The average false discovery rate and standard errors are plotted as a function of the number of interactions found in Figure \ref{fig:introduction_example}.\\
    \begin{figure}[h]
      \begin{center}
        \includegraphics[width=0.5\linewidth]{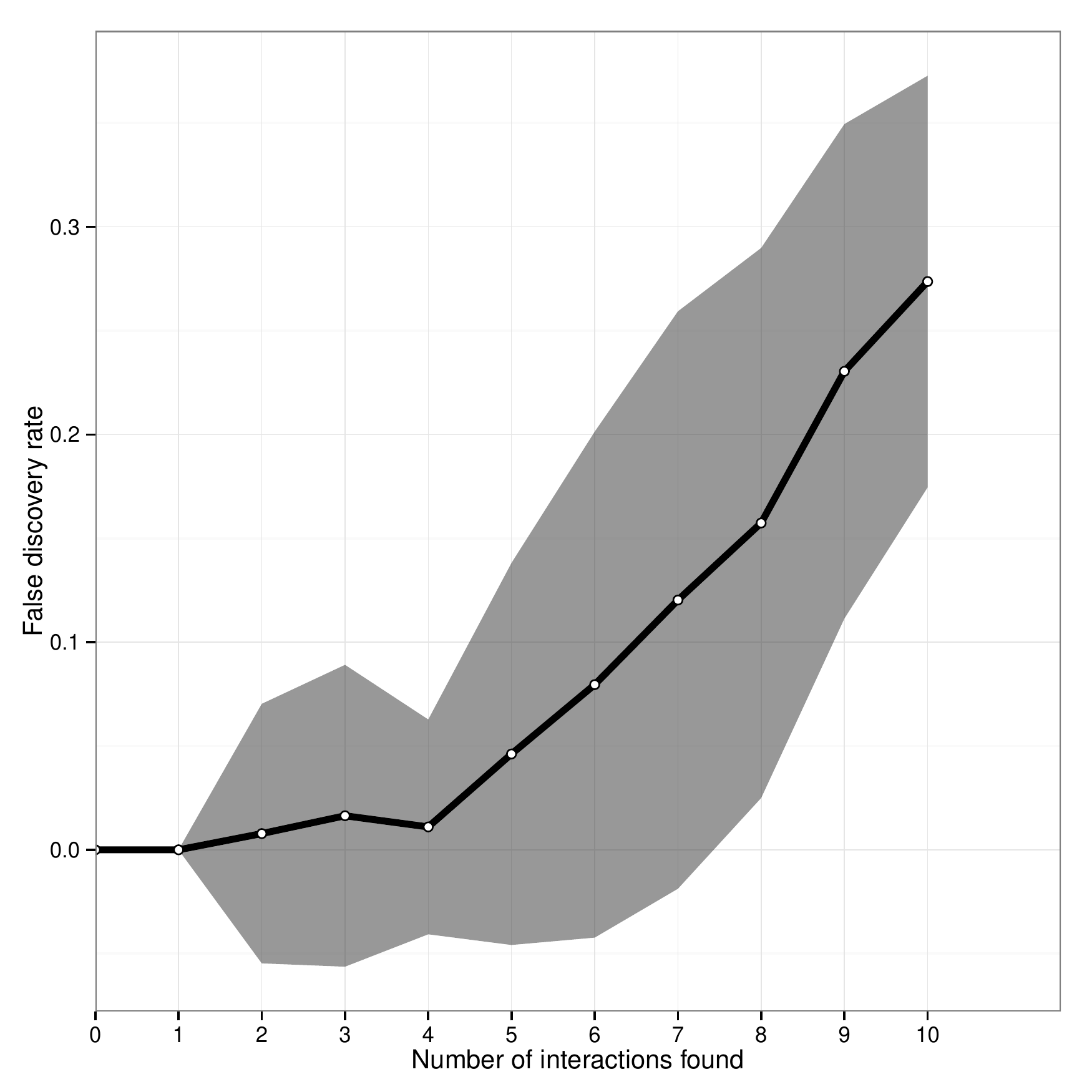}
      \end{center}
      \caption{False discovery rate vs number of discovered interactions}
      \label{fig:introduction_example}
    \end{figure}
  \end{subsection}

  \begin{subsection}{Organization of the paper}\label{sec:organization}
    The rest of the paper is organized as follows. Section
    \ref{sec:background_and_notation} introduces basic notions and
    notation. In section \ref{sec:methodology}, we introduce the group-lasso and how it fits into our framework for finding interactions. We also show how \textsc{glinternet} is equivalent to an overlapping grouped lasso. We discuss screening in Section \ref{sec:screening}, and give several examples with both synthetic and real datasets in Section \ref{sec:real_data} before going into algorithmic details in Section \ref{sec:algorithm_details}. We conclude with a discussion in Section \ref{sec:discussion}.
  \end{subsection}
\end{section}

\begin{section}{Background and notation}\label{sec:background_and_notation}
  
  We use the random variables $Y$ to denote the response, $F$ to denote a categorical feature, and $Z$ to denote a continuous feature. We use $L$ to denote the number of levels that $F$ can take. For simplicity of notation we will use the first $L$ positive integers to represent these $L$ levels, so that $F$ takes values in the set $\{i\in\mathbb{Z}: 1\leq i\leq L\}$. Each categorical variable has an associated random variable $X\in\mathbb{R}^L$ with a 1 that indicates which level $F$ takes, and 0 everywhere else.

 When there are $p$ categorical (or continuous) features, we will use subscripts to index them, i.e. $F_1,\ldots,F_p$. Boldface font will always be reserved for vectors or matrices that comprise of realizations of these random variables. For example, $\mathbf{Y}$ is the $n$-vector of observations of the random variable $Y$, $\mathbf{F}$ is the $n$-vector of realizations of the random variable $F$, and $\mathbf{Z}$ is the $n$-vector of realizations of the random variable $Z$. Similarly, $\mathbf{X}$ is a $n\times L$ indicator matrix whose $i$-th row consists of a 1 in the $\mathbf{F}_i$-th column and 0 everywhere else. We use a $n\times(L_i\cdot L_j)$ indicator matrix $\mathbf{X}_{i:j}$ to represent the interaction $F_i:F_j$. We will write
  \begin{eqnarray}
    \mathbf{X}_{i:j} = \mathbf{X}_i * \mathbf{X}_j,
  \end{eqnarray}
  where the first $L_j$ columns of $\mathbf{X}_{i:j}$ are obtained by taking the elementwise products between the first column of $\mathbf{X}_i$ and the columns of $\mathbf{X}_j$, and likewise for the other columns. For example,
  \begin{eqnarray}
    \begin{pmatrix}
      a & b\\
      c & d
    \end{pmatrix} *
    \begin{pmatrix}
      e & f\\
      g & h
    \end{pmatrix} =
    \begin{pmatrix}
      ae & af & be & bf\\
      cg & ch & dg & dh
    \end{pmatrix}.
  \end{eqnarray}

  \begin{subsection}{Definition of interaction for categorical variables}
    To see how Definition \ref{def:interactions} applies to this setting, let $\E(Y|F_1=i,F_2=j) = \mu_{ij}$, the conditional mean of $Y$ given that $F_1$ takes level $i$, and $F_2$ takes level $j$. There are 4 possible cases:
    \begin{enumerate}
    \item $\mu_{ij}=\mu$ (no main effects, no interactions)
    \item $\mu_{ij}=\mu+\theta_1^i$ (one main effect $F_1$)
    \item $\mu_{ij} = \mu+\theta_1^i+\theta_2^j$ (two main effects)
    \item $\mu_{ij} = \mu+\theta_1^i+\theta_2^j+\theta_{1:2}^{ij}$ (main effects and interaction)
    \end{enumerate}
  
    Note that all but the first case is overparametrized, and the usual procedure is to impose sum constraints on the main effects and interactions:
    \begin{eqnarray}
      \label{eq:main_effect_constraints}
      \sum_{i=1}^{L_1}\theta_1^i = 0, \quad \sum_{j=1}^{L_2}\theta_2^j = 0
    \end{eqnarray}
    and
    \begin{eqnarray}
      \label{eq:interaction_constraints}
      \sum_{i=1}^{L_1}\theta_{1:2}^{ij} = 0 \text{ for fixed $j$}, \quad \sum_{j=1}^{L_2}\theta_{1:2}^{ij} = 0 \text{ for fixed $i$}.
    \end{eqnarray}
    
    In what follows, $\theta_i$, $i=1,\cdots, p$, will represent the main effect coefficients, and $\theta_{i:j}$ will denote the interaction coefficients. We will use the terms ``main effect coefficients'' and ``main effects'' interchangeably, and likewise for interactions.
  \end{subsection}

  \begin{subsection}{Weak and strong hierarchy}
    An interaction model is said to obey strong hierarchy if an interaction can be present only if both of its main effects are present. Weak hierarchy is obeyed as long as either of its main effects are present. Since main effects as defined above can be viewed as deviations from the global mean, and interactions are deviations from the main effects, it rarely make sense to have interactions without main effects. This leads us to prefer interaction models that are hierarchical. We will see in Section \ref{sec:methodology} that \textsc{glinternet} produces estimates that obey strong hierarchy.
  \end{subsection}

  \begin{subsection}{First order interaction model}\label{sec:interaction_model}
    Our model for a quantitative response $Y$ is given by
    \begin{eqnarray}\label{eq:model_quantitative}
      Y = \mu + \sum_{i=1}^pX_i\theta_i + \sum_{i<j}X_{i:j}\theta_{i:j} + \epsilon
    \end{eqnarray}
    where $\epsilon\sim N(0,\sigma^2)$. For binary responses, we have 
    \begin{eqnarray}\label{eq:model_binary}
      logit(\Pr(Y=1|X)) = \mu+ \sum_{i=1}^pX_i\theta_i + \sum_{i<j}X_{i:j}\theta_{i:j}.
    \end{eqnarray}
    We fit these models by minimizing an appropriate choice of loss function $\mathcal{L}$. Because the models are still overparametrized, we impose the relevant constraints for the coefficients $\theta$ (see (\ref{eq:main_effect_constraints}) and (\ref{eq:interaction_constraints})). We can thus cast the problem of fitting a first-order interaction model as an optimization problem with constraints:
    \begin{eqnarray}
      \label{eq:basic_optimization_problem}
      \argmin_{\mu,\theta} \mathcal{L}(\mathbf{Y},\mathbf{X}_{i:i\leq p}, \mathbf{X}_{i:j}; \mu, \theta)
    \end{eqnarray}
    subject to the relevant constraints. $\mathcal{L}$ can be any loss function, typically squared error loss for the quantitative response model given by
    \begin{eqnarray}
      \mathcal{L}(\mathbf{Y}, \mathbf{X}_{i:i\leq p}, \mathbf{X}_{i:j}; \mu, \theta) = \frac{1}{2}\left\|\mathbf{Y} - \mu\cdot\ind - \sum_{i=1}^p\mathbf{X}_i\theta_i + \sum_{i<j}\mathbf{X}_{i:j}\theta_{i:j}\right\|_2^2,
    \end{eqnarray}
    and logistic loss for the binomial response model given by
    \begin{multline}
      \mathcal{L}(\mathbf{Y}, \mathbf{X}_{i:i\leq p}, \mathbf{X}_{i:j}; \mu, \theta) = -\left[\mathbf{Y}^T(\mu\cdot\ind+\sum_{i=1}^p\mathbf{X}_i\theta_i+\sum_{i<j}\mathbf{X}_{i:j}\theta_{i:j})\right.\\
      - \left.\ind^T\log\left(\ind+\exp(\mu\cdot\ind+\sum_{i=1}^p\mathbf{X}_i\theta_i+\sum_{i<j}\mathbf{X}_{i:j}\theta_{i:j})\right)\right],
    \end{multline}
    where the log and exp are taken component-wise. Notice that the optimization problem (\ref{eq:basic_optimization_problem}) does not impose any hierarchical constraints on the coefficients $\theta$, so that its solutions may obey neither weak nor strong hierarchy. This need not be a problem, although it is our contention that some kind of hierarchy is more natural.
  \end{subsection}

  \begin{subsection}{Group-lasso and overlapped group-lasso}\label{sec:group_lasso}
    Since \textsc{glinternet}'s workhorse is the group-lasso, we briefly introduce it here. We refer the reader to \cite{Yuan:2006:JRSS} for more technical details.

    The group-lasso can be thought of as a more general version of the well-known lasso. Suppose there are $p$ groups of variables (possibly of different sizes), and let the feature matrix for group $i$ be denoted by $\mathbf{X}_i$. Let $\mathbf{Y}$ denote the vector of observations. The group-lasso obtains the estimates $\hat{\beta}_j$ as the solution to
    \begin{eqnarray}\label{eq:group_lasso_objective}
      \displaystyle\argmin_{\mu,\beta}\frac{1}{2}\|\mathbf{Y}-\mu\cdot\ind-\sum_{j=1}^p\mathbf{X}_j\beta_j\|_2^2+\lambda\sum_{j=1}^p\gamma_j\|\beta_j\|_2.
    \end{eqnarray}
    Note that if each group consists of only one variable, this reduces to the lasso criterion. In our application, each indicator matrix $\mathbf{X}$ will represent a group. The group-lasso applied to the $\mathbf{X}$'s intuitively selects those variables that have strong overall contribution from all the levels toward explaining the response by setting some of the $\beta$'s to 0. If an estimate $\hat{\beta}_i$ is nonzero, then \textit{all} its components are usually nonzero.

    The parameter $\lambda$ controls the amount of regularization, with larger values implying more regularization. The $\gamma$'s allow each group to be penalized to different extents; we set them all equal to 1 (see Section \ref{sec:algorithm_details}). To solve (\ref{eq:group_lasso_objective}), we start with $\lambda$ large enough so that all estimates are zero. Decreasing $\lambda$ along a grid of values results in a path of solutions, from which an optimal $\lambda$ can be chosen by cross validation or some model selection procedure.

    The Karush-Kuhn-Tucker (KKT) optimality conditions for the group-lasso are simple to compute and check. For group $i$, they are
    \begin{eqnarray}
      \label{eq:kkt_conditions}
      \|\mathbf{X}_i^T(\mathbf{Y}-\hat{\mathbf{Y}})\|_2 < \gamma_i\lambda & \text{if}\quad\hat{\beta}_i=0\\
      \|\mathbf{X}_i^T(\mathbf{Y}-\hat{\mathbf{Y}})\|_2 = \gamma_i\lambda & \text{if}\quad\hat{\beta}_i\neq0.
    \end{eqnarray}
    The group-lasso is commonly fit with some form of gradient descent, and convergence can be confirmed by checking the KKT conditions. Details of the algorithm we use can be found in Section \ref{sec:algorithm_details}.

    The overlapped group-lasso is a variant of the group-lasso where the groups of variables are allowed to have overlaps, i.e. some variables can show up in more than one group. However, each time a variable shows up in a group, it gets a new coefficient. For example, if a variable is included in 3 groups, then it has 3 coefficients that need to be estimated.
  \end{subsection}
\end{section}

\begin{section}{Methodology and results}\label{sec:methodology}

  We want to fit the first order interaction model in a way that obeys strong hierarchy. We show in Section \ref{sec:overlapped_group_lasso} how this can be achieved by adding an overlapped group-lasso penalty to the objective in (\ref{eq:basic_optimization_problem}). We then show how this optimization problem can be conveniently solved via a group-lasso without overlaps.

  \begin{subsection}{Strong hierarchy through overlapped group-lasso}\label{sec:overlapped_group_lasso}
    Adding an overlapped group-lasso penalty to (\ref{eq:basic_optimization_problem}) is one way of obtaining solutions that satisfy the strong hierarchy property. The results that follow hold for both squared error and logistic loss, but we focus on the former for clarity.

    Consider the case where there are two categorical variables $F_1$ and $F_2$ with $L_1$ and $L_2$ levels respectively. Their indicator matrices are given by $\mathbf{X}_1$ and $\mathbf{X}_2$. We solve
    \begin{multline}
      \label{eq:overlapped_group_lasso}
      \argmin_{\mu,\alpha,\tilde{\alpha}} \frac{1}{2}\left\|\mathbf{Y}-\mu\cdot\ind-\mathbf{X}_1\alpha_1-\mathbf{X}_2\alpha_2-[\mathbf{X}_1\text{  }\mathbf{X}_2\text{  }\mathbf{X}_{1:2}]\left[
          \begin{array}{c}
            \tilde{\alpha}_1\\
            \tilde{\alpha}_2\\
            \alpha_{1:2}
          \end{array}
        \right]\right\|_2^2\\
      + \lambda\left(\|\alpha_1\|_2 + \|\alpha_2\|_2 + \sqrt{L_2\|\tilde{\alpha}_1\|_2^2 + L_1\|\tilde{\alpha}_2\|_2^2+\|\alpha_{1:2}\|_2^2}\right)
    \end{multline}
    subject to
    \begin{eqnarray}
      \label{eq:overlapped_main_effect_constraints}
      \sum_{i=1}^{L_1}\alpha_1^i = 0,\quad \sum_{j=1}^{L_2}\alpha_2^j = 0,\quad \sum_{i=1}^{L_1}\tilde{\alpha}_1^i = 0, \quad \sum_{j=1}^{L_2}\tilde{\alpha}_2^j = 0
    \end{eqnarray}
    and
    \begin{eqnarray}
      \label{eq:overlapped_interaction_constraints}
      \sum_{i=1}^{L_1}\alpha_{1:2}^{ij} = 0 \text{ for fixed $j$},\quad \sum_{j=1}^{L_2}\alpha_{1:2}^{ij} = 0 \text{ for fixed $i$}.
    \end{eqnarray}
    Notice that $\mathbf{X}_i$, $i=1,2$ each have two different coefficient vectors $\alpha_i$ and $\tilde{\alpha}_i$, resulting in an overlapped penalty. The $\sqrt{L_2\|\tilde{\alpha}_1\|_2^2 + L_1\|\tilde{\alpha}_2\|_2^2+\|\alpha_{1:2}\|_2^2}$ term results in estimates that satisfy strong hierarchy, because either $\hat{\tilde{\alpha}}_1=\hat{\tilde{\alpha}}_2=\hat{\alpha}_{1:2}=0$ or all are nonzero, i.e. interactions are always present with both main effects.

    The constants $L_1$ and $L_2$ are chosen to put $\tilde{\alpha}_1$, $\tilde{\alpha}_2$, and $\alpha_{1:2}$ on the same scale. To motivate this, note that we can write
    \begin{eqnarray}
      X_1\tilde{\alpha}_1 = X_{1:2}[\underbrace{\tilde{\alpha}_1,\ldots,\tilde{\alpha}_1}_{L_2\text{ copies}}]^T,
    \end{eqnarray}
    and similarly for $X_2\tilde{\alpha}_2$. We now have a representation for $\tilde{\alpha}_1$ and $\tilde{\alpha}_2$ with respect to the space defined by $X_{1:2}$, so that they are ``comparable'' to $\alpha_{1:2}$. We then have
    \begin{eqnarray}
      \|[\underbrace{\tilde{\alpha}_1,\ldots,\tilde{\alpha}_1}_{L_2\text{ copies}}]\|_2^2 = L_2\|\tilde{\alpha}_1\|_2^2
    \end{eqnarray}
    and likewise for $\tilde{\alpha}_2$. More details are given in Section \ref{sec:equivalence_with_group_lasso} below.

    The actual main effects and interactions can be recovered as
    \begin{eqnarray}
      \hat{\theta}_1 = \hat{\alpha}_1 + \hat{\tilde{\alpha}}_1\\
      \hat{\theta}_2 = \hat{\alpha}_2 + \hat{\tilde{\alpha}}_2\\
      \hat{\theta}_{1:2} = \hat{\alpha}_{1:2}.
    \end{eqnarray}
    Because of the strong hierarchy property mentioned above, we also have
    \begin{eqnarray}
      \hat{\theta}_{1:2} \neq 0 \Longrightarrow \hat{\theta}_1\neq0 \text{ and } \hat{\theta}_2\neq0.
    \end{eqnarray}

    The overlapped group-lasso with constraints is conceptually simple, but care must be taken in how we parametrize the constraints. This is especially so because we penalize the coefficients, and any representation of the problem that does not preserve symmetry will result in unequal penalization schemes for the coefficients. The problem becomes more tedious as the number of variables and levels grows. We now show how to solve the overlapped group-lasso problem by solving an equivalent \textit{unconstrained} group-lasso problem. This is advantageous because
    \begin{enumerate}
    \item the problem can be represented in a symmetric way, thus avoiding the need for careful choices of parametrization, and
    \item we only have to fit a group-lasso without constraints on the coefficients, which is a well-studied problem.
    \end{enumerate}
  \end{subsection}
  
  \begin{subsection}{Equivalence with unconstrained group-lasso} \label{sec:equivalence_with_group_lasso}
    We show that the overlapped group-lasso above can be solved with a simple group-lasso. We will need two Lemmas. The first shows that because we fit an intercept in the model, the estimated coefficients $\hat{\beta}$ for categorical variables will have mean zero.
    \begin{lemma}\label{lemma:zero_mean}
      Let $X$ be an indicator matrix. Then the solution $\hat{\beta}$ to
      \begin{eqnarray}
        \argmin_{\mu, \beta}\frac{1}{2}\left\|\mathbf{Y}-\mu\cdot\ind-\mathbf{X}\beta\right\|_2^2 + \lambda\|\beta\|_2
      \end{eqnarray}
      satisfies
      \begin{eqnarray}
        \bar{\hat{\beta}} = 0.
      \end{eqnarray}
      The same is true for logistic loss.
    \end{lemma}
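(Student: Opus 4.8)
The plan is to exploit the defining feature of an indicator matrix: each row of $\mathbf{X}$ sums to $1$, so $\mathbf{X}\ind_L=\ind$, where $\ind_L$ denotes the all-ones vector of length $L$ and $\ind$ the all-ones $n$-vector. Thus the intercept direction already lies in the column space of $\mathbf{X}$, and the model is \emph{overparametrized} in exactly the direction $\ind_L$. Concretely, for any scalar $c$, replacing $(\mu,\beta)$ by $(\mu-c,\ \beta+c\,\ind_L)$ leaves the linear combination $\mu\cdot\ind+\mathbf{X}\beta$ unchanged, hence leaves the loss term untouched; only the penalty $\|\beta\|_2$ is affected. The whole proof rests on choosing $c$ to minimize that penalty.

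First I would fix any candidate minimizer $(\hat\mu,\hat\beta)$ and study the objective along this one-parameter family. Since the loss is constant along the family, minimizing the objective over $c$ reduces to minimizing $c\mapsto\|\hat\beta+c\,\ind_L\|_2$. Its square is the strictly convex quadratic $\|\hat\beta\|_2^2+2c\,\ind_L^T\hat\beta+c^2 L$, whose unique minimizer is $c^\star=-\ind_L^T\hat\beta/L=-\bar{\hat\beta}$; because $\sqrt{\cdot}$ is strictly increasing, $c^\star$ is also the unique minimizer of the norm itself.

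Then I would argue by contradiction. If $\bar{\hat\beta}\neq0$, then $c^\star\neq0$ and the point $(\hat\mu-c^\star,\ \hat\beta+c^\star\ind_L)$ attains the same loss but strictly smaller penalty, contradicting optimality of $(\hat\mu,\hat\beta)$. Hence $\bar{\hat\beta}=0$, with the degenerate case $\hat\beta=0$ giving $\bar{\hat\beta}=0$ immediately.

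For logistic loss the identical reasoning applies verbatim, since that loss depends on $(\mu,\beta)$ only through the linear predictor $\mu\cdot\ind+\mathbf{X}\beta$, which is invariant under the reparametrization, and the penalty step is unchanged. An equally short alternative is via the optimality conditions of Section~\ref{sec:group_lasso}: stationarity in the unpenalized $\mu$ gives $\ind^T(\mathbf{Y}-\hat{\mathbf{Y}})=0$, while the subgradient condition for $\beta$ reads $\mathbf{X}^T(\mathbf{Y}-\hat{\mathbf{Y}})=\lambda s$ with $s=\hat\beta/\|\hat\beta\|_2$ when $\hat\beta\neq0$; left-multiplying by $\ind_L^T$ and using $\ind_L^T\mathbf{X}^T=(\mathbf{X}\ind_L)^T=\ind^T$ collapses the left-hand side to zero, forcing $\ind_L^T\hat\beta=0$. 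I expect the only genuine subtlety to be the bookkeeping point that one must minimize the \emph{unsquared} norm, which is why I verify that passing to the square does not move the minimizer; everything else is a direct consequence of the identity $\mathbf{X}\ind_L=\ind$.
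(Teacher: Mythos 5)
Your proposal is correct and follows essentially the same route as the paper's own proof: both exploit that $\mathbf{X}\ind_L=\ind$ makes the fit invariant under the shift $(\mu,\beta)\mapsto(\mu-c,\beta+c\,\ind_L)$, so optimality forces the penalty-minimizing choice $c^\star=-\bar{\hat\beta}$ to be zero; you simply spell out the quadratic minimization, the contradiction, and the squared-versus-unsquared-norm point that the paper leaves implicit. Your supplementary KKT argument (annihilating the subgradient condition with $\ind_L^T$) is also valid, but the core argument is the paper's.
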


    \begin{proof}
      Because $\mathbf{X}$ is an indicator matrix, each row consists of exactly a single 1 (all other entries 0), so that
      \begin{eqnarray}
        \mathbf{X}\cdot c\ind = c\ind
      \end{eqnarray}
      for any constant $c$. It follows that if $\hat{\mu}$ and $\hat{\beta}$ are solutions, then so are $\hat{\mu}+c\ind$ and $\hat{\beta}-c\ind$. But the norm $\|\hat{\beta}-c\ind\|_2$ is minimized for $c=\bar{\hat{\beta}}$.
    \end{proof}
    
    The next Lemma states that if we include two intercepts in the model, one penalized and the other unpenalized, then the penalized intercept will be estimated to be zero. This is because we can achieve the same fit with a lower penalty by taking $\mu\longleftarrow\mu+\tilde{\mu}$.
    \begin{lemma}\label{lemma:zero_intercept}
      The optimization problem
      \begin{eqnarray}
        \argmin_{\mu, \tilde{\mu}, \beta}\frac{1}{2}\left\|\mathbf{Y}-\mu\cdot\ind-\tilde{\mu}\cdot\ind-\ldots\right\|_2^2 + \lambda\sqrt{\|\tilde{\mu}\|_2^2+\|\beta\|_2^2}
      \end{eqnarray}
      has solution $\hat{\tilde{\mu}} = 0$ for all $\lambda>0$. The same result holds for logistic loss.
    \end{lemma}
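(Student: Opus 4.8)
The plan is to argue by the direct ``fold the penalized intercept into the unpenalized one'' perturbation that is hinted at in the sentence preceding the statement. I would begin with an arbitrary minimizer $(\hat\mu, \hat{\tilde\mu}, \hat\beta)$ of the objective. Since the objective is a sum of a convex loss and a convex norm penalty, a minimizer exists for squared-error loss; for logistic loss I would assume the minimum is attained (or work with the infimum directly, the inequalities below being unaffected). The idea is to compare this candidate against the modified triple $(\hat\mu + \hat{\tilde\mu},\, 0,\, \hat\beta)$, in which the penalized intercept has been absorbed into the unpenalized one.

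The first step is to check that the loss is unchanged by this substitution. Both intercepts multiply the same all-ones vector $\ind$, and the remaining terms abbreviated by ``$\ldots$'' involve neither $\mu$ nor $\tilde\mu$; hence the argument of the squared-error norm depends on the two intercepts only through $\hat\mu\cdot\ind + \hat{\tilde\mu}\cdot\ind = (\hat\mu + \hat{\tilde\mu})\cdot\ind$, which is identical for the original and modified triples. In the logistic case the same holds for the linear predictor inside the log-likelihood. So the loss term contributes exactly the same amount in both cases.

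The second step is to compare the penalties. The unpenalized intercept $\mu$ never enters the penalty, so after the substitution the penalty becomes $\lambda\sqrt{0 + \|\hat\beta\|_2^2} = \lambda\|\hat\beta\|_2$, whereas beforehand it was $\lambda\sqrt{\|\hat{\tilde\mu}\|_2^2 + \|\hat\beta\|_2^2}$. Since $\sqrt{\|\hat{\tilde\mu}\|_2^2 + \|\hat\beta\|_2^2} \ge \|\hat\beta\|_2$, with strict inequality precisely when $\hat{\tilde\mu}\neq 0$, and since $\lambda > 0$, the modified triple attains a strictly smaller objective whenever $\hat{\tilde\mu}\neq 0$. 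This contradicts the optimality of $(\hat\mu, \hat{\tilde\mu}, \hat\beta)$, forcing $\hat{\tilde\mu}=0$ at every minimizer.

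I expect no serious obstacle: the argument is essentially the same observation used in Lemma \ref{lemma:zero_mean}, namely that the fit sees only $\mu + \tilde\mu$ while the penalty sees $\tilde\mu$ on its own. The only points deserving care are (i) flagging why the hypothesis $\lambda > 0$ is used --- it is exactly what upgrades the weak inequality on the penalties to a strict one, so the conclusion genuinely fails at $\lambda = 0$ --- and (ii) noting that the computation carries over verbatim to logistic loss, because that loss also depends on $\mu$ and $\tilde\mu$ only through their sum in the linear predictor.
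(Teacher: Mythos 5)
Your proof is correct and is essentially the paper's own argument: the paper offers no separate proof of this lemma, only the one-line remark preceding it that one can achieve the same fit with a lower penalty by taking $\mu\longleftarrow\mu+\tilde{\mu}$, which is exactly the absorption-and-strict-comparison argument you formalize. Your additional care about why $\lambda>0$ is needed for strictness and why the argument transfers verbatim to logistic loss (the loss depends on the intercepts only through their sum) is a faithful elaboration of that sketch, not a different route.
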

    
    The next theorem shows how the overlapped group-lasso in Section \ref{sec:overlapped_group_lasso} reduces to a group-lasso.
    \begin{theorem}\label{theorem:main_theorem}
      Solving the constrained optimization problem (\ref{eq:overlapped_group_lasso}) - (\ref{eq:overlapped_interaction_constraints}) in Section \ref{sec:overlapped_group_lasso} is equivalent to solving the unconstrained problem
      \begin{eqnarray}
        \argmin_{\mu,\beta}\frac{1}{2}\left\|\mathbf{Y}-\mu\cdot\ind-\mathbf{X}_1\beta_1-\mathbf{X}_2\beta_2-\mathbf{X}_{1:2}\beta_{1:2}\right\|_2^2 + \lambda\left(\|\beta_1\|_2+\|\beta_2\|_2+\|\beta_{1:2}\|_2\right).
      \end{eqnarray}
    \end{theorem}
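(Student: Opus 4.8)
The plan is to prove the equivalence by exhibiting an explicit, objective-preserving correspondence between feasible points of the two problems, and then to conclude that the optima coincide through a pair of matching inequalities. The bridge between the two parametrizations is the identity noted in the text, $\mathbf{X}_1\tilde{\alpha}_1 = \mathbf{X}_{1:2}[\tilde{\alpha}_1,\ldots,\tilde{\alpha}_1]^T$ (with $L_2$ copies) together with its analogue for $\mathbf{X}_2\tilde{\alpha}_2$, which lets me fold the two $\tilde{\alpha}$ contributions into the interaction block. Concretely, given any feasible point of the overlapped problem I would set $\beta_1=\alpha_1$, $\beta_2=\alpha_2$, and $\beta_{1:2}=(\tilde{\alpha}_1\otimes\ind_{L_2})+(\ind_{L_1}\otimes\tilde{\alpha}_2)+\alpha_{1:2}$, where $\otimes$ denotes the replication above. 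The folding identity then shows that the two fitted vectors are identical, so the loss terms agree.

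First I would check that the penalties also agree. Writing $L_2\|\tilde{\alpha}_1\|_2^2=\|\tilde{\alpha}_1\otimes\ind_{L_2}\|_2^2$ and $L_1\|\tilde{\alpha}_2\|_2^2=\|\ind_{L_1}\otimes\tilde{\alpha}_2\|_2^2$, the square-root term becomes $\sqrt{\|\tilde{\alpha}_1\otimes\ind_{L_2}\|_2^2+\|\ind_{L_1}\otimes\tilde{\alpha}_2\|_2^2+\|\alpha_{1:2}\|_2^2}$. The crux is that the three summands defining $\beta_{1:2}$ are mutually orthogonal precisely because of constraints (\ref{eq:overlapped_main_effect_constraints})--(\ref{eq:overlapped_interaction_constraints}): the cross term between the first two equals $(\sum_i\tilde{\alpha}_1^i)(\sum_j\tilde{\alpha}_2^j)=0$, and each cross term with $\alpha_{1:2}$ collapses to a weighted sum of the row- or column-sums of $\alpha_{1:2}$, which vanish. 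Hence the square-root term is exactly $\|\beta_{1:2}\|_2$, and since $\|\alpha_1\|_2=\|\beta_1\|_2$ and $\|\alpha_2\|_2=\|\beta_2\|_2$, the overlapped objective equals the unconstrained objective at the image point. This yields $\min(\text{unconstrained})\le\min(\text{overlapped})$.

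For the reverse inequality I would start from an unconstrained solution and decompose the reshaped $L_1\times L_2$ array $\beta_{1:2}$ by a two-way ANOVA, $\beta_{1:2}^{ij}=m+a^i+b^j+s^{ij}$, where $m$ is the grand mean, $a$ and $b$ are the centered row and column means, and $s$ obeys both sum-to-zero conditions. Setting $\tilde{\alpha}_1=a$, $\tilde{\alpha}_2=b$, $\alpha_{1:2}=s$, $\alpha_1=\beta_1$, $\alpha_2=\beta_2$, and absorbing $m$ into the intercept ($\mu\leftarrow\mu+m$) produces a point that is feasible: the sum-to-zero conditions on $a,b,s$ hold by construction, while $\sum_i\alpha_1^i=\sum_i\beta_1^i=0$ and $\sum_j\alpha_2^j=0$ follow from Lemma \ref{lemma:zero_mean} applied to the groups $\mathbf{X}_1$ and $\mathbf{X}_2$. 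The folding identity again shows the fit is unchanged, and by the orthogonality above the interaction penalty becomes $\|\beta_{1:2}-m\ind\|_2\le\|\beta_{1:2}\|_2$, so the overlapped objective does not exceed the unconstrained one, giving $\min(\text{overlapped})\le\min(\text{unconstrained})$. Combining the two inequalities gives equality of the optima, and the recovered effects are $\hat{\theta}_1=\hat{\alpha}_1+\hat{\tilde{\alpha}}_1$, $\hat{\theta}_2=\hat{\alpha}_2+\hat{\tilde{\alpha}}_2$, $\hat{\theta}_{1:2}=\hat{\alpha}_{1:2}$.

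The step I expect to be delicate is showing that the two \emph{minimizers} correspond, not merely their values---that the reverse map carries an unconstrained minimizer to an overlapped minimizer with no slack in the penalty. This is where the two lemmas do their work: Lemma \ref{lemma:zero_mean} forces the grand mean of the optimal $\beta_{1:2}$ (and of $\beta_1,\beta_2$) to vanish, so that $m=0$ and the bound $\|\beta_{1:2}-m\ind\|_2\le\|\beta_{1:2}\|_2$ holds with equality at the optimum; equivalently, Lemma \ref{lemma:zero_intercept} guarantees that the grand-mean (intercept) component of the interaction group is estimated to zero and can be absorbed into $\mu$ at no penalty cost. Some care is also needed to keep the replication and orthogonality bookkeeping consistent with the column ordering of $\mathbf{X}_{1:2}$, but this is routine once the ANOVA decomposition is fixed.
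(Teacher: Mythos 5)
Your proof is correct and follows essentially the same route as the paper's: the two-way ANOVA (Scheff\'e) decomposition of $\beta_{1:2}$, orthogonality of its components under the sum-to-zero constraints, the folding identities $\mathbf{X}_{1:2}\mathbf{Z}_1=\mathbf{X}_1$, $\mathbf{X}_{1:2}\mathbf{Z}_2=\mathbf{X}_2$, and Lemmas \ref{lemma:zero_mean} and \ref{lemma:zero_intercept} to handle the centering of the main effects and the grand mean. The only difference is organizational: the paper augments the overlapped problem with a penalized intercept $\tilde{\mu}$ (estimated to be zero by Lemma \ref{lemma:zero_intercept}) and exhibits a direct reparametrization, whereas you absorb the grand mean into the unpenalized intercept and conclude via two matching inequalities --- the same ingredients, packaged as a two-sided bound.
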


    \begin{proof}
      We need to show that the group-lasso objective can be equivalently written as an overlapped group-lasso with the appropriate constraints on the parameters. We begin by rewriting (\ref{eq:overlapped_group_lasso}) as
      \begin{multline}
        \argmin_{\mu,\tilde{\mu},\alpha,\tilde{\alpha}} \frac{1}{2}\left\|\mathbf{Y}-\mu\cdot\ind-\mathbf{X}_1\alpha_1-\mathbf{X}_2\alpha_2-[\ind\text{  }\mathbf{X}_1\text{  }\mathbf{X}_2\text{  }\mathbf{X}_{1:2}]\left[
            \begin{array}{c}
              \tilde{\mu}\\
              \tilde{\alpha}_1\\
              \tilde{\alpha}_2\\
              \alpha_{1:2}
            \end{array}
          \right]\right\|_2^2\\
        + \lambda\left(\|\alpha_1\|_2 + \|\alpha_2\|_2 + \sqrt{L_1L_2\|\tilde{\mu}\|_2^2+L_2\|\tilde{\alpha}_1\|_2^2 + L_1\|\tilde{\alpha}_2\|_2^2+\|\alpha_{1:2}\|_2^2}\right).
      \end{multline}
      By Lemma \ref{lemma:zero_intercept}, we will estimate $\hat{\tilde{\mu}} = 0$. Therefore we have not changed the solutions in any way.

      Lemma \ref{lemma:zero_mean} shows that the first two constraints in (\ref{eq:overlapped_main_effect_constraints}) are satisfied by the estimated main effects $\hat{\beta}_1$ and $\hat{\beta}_2$. We now show that
      \begin{eqnarray}
        \|\beta_{1:2}\|_2 = \sqrt{L_1L_2\|\tilde{\mu}\|_2^2+L_2\|\tilde{\alpha}_1\|_2^2 + L_1\|\tilde{\alpha}_2\|_2^2+\|\alpha_{1:2}\|_2^2}
      \end{eqnarray}
      where the $\tilde{\alpha}_1, \tilde{\alpha}_2$, and $\alpha_{1:2}$ satisfy the constraints in (\ref{eq:overlapped_main_effect_constraints}) and (\ref{eq:overlapped_interaction_constraints}).

      For fixed levels $i$ and $j$, we can decompose $\beta_{1:2}$ (see \cite{Scheffe:1959:Wiley}) as
      \begin{eqnarray}
        \beta_{1:2}^{ij}& =& \beta_{1:2}^{\cdot\cdot} + (\beta_{1:2}^{i\cdot}-\beta_{1:2}^{\cdot\cdot}) + (\beta_{1:2}^{\cdot j}-\beta_{1:2}^{\cdot\cdot}) + (\beta_{1:2}^{ij}-\beta_{1:2}^{i\cdot}-\beta_{1:2}^{\cdot j}+\beta_{1:2}^{\cdot\cdot})\\
        &\equiv& \tilde{\mu} + \tilde{\alpha}_1^i + \tilde{\alpha}_2^j + \alpha_{1:2}^{ij}.
      \end{eqnarray}
      It follows that the whole $(L_1L_2)$-vector $\beta_{1:2}$ can be written as
      \begin{eqnarray}
        \label{eq:interaction_decomposition}
        \beta_{1:2} = \ind\tilde{\mu} + \mathbf{Z}_1\tilde{\alpha}_1 + \mathbf{Z}_2\tilde{\alpha}_2 + \alpha_{1:2},
      \end{eqnarray}
      where $\mathbf{Z}_1$ is a $L_1L_2\times L_1$ indicator matrix of the form
      \begin{eqnarray}
        \underbrace{\begin{pmatrix}
            \ind_{L_2\times1} & 0 & \cdots & 0\\
            0 & \ind_{L_2\times 1} & \cdots & 0\\
            0 & 0 & \ddots & 0\\
            0 & 0 & 0 & \ind_{L_2\times1}
          \end{pmatrix}}_{L_1 \text{ columns}}
      \end{eqnarray}
      and $\mathbf{Z}_2$ is a $L_1L_2\times L_2$ indicator matrix of the form
      \begin{eqnarray}
        \text{$L_1$ copies}\left\{
          \begin{pmatrix}
            I_{L_2\times L_2}\\
            \vdots\\
            I_{L_2\times L_2}
          \end{pmatrix}
        \right.
      \end{eqnarray}
      It follows that
      \begin{eqnarray}
        \mathbf{Z}_1\tilde{\alpha}_1 = (\underbrace{\tilde{\alpha}_1^1,\ldots,\tilde{\alpha}_1^1}_{L_2 \text{ copies}}, \underbrace{\tilde{\alpha}_1^2,\ldots,\tilde{\alpha}_1^2}_{L_2 \text{ copies}},\ldots,\underbrace{\tilde{\alpha}_1^{L_1},\ldots,\tilde{\alpha}_1^{L_1}}_{L_2 \text{ copies}})^T
      \end{eqnarray}
      and
      \begin{eqnarray}
        \mathbf{Z}_2\tilde{\alpha}_2 = (\tilde{\alpha}_2^1,\ldots,\tilde{\alpha}_2^{L_2},\tilde{\alpha}_2^1,\ldots,\tilde{\alpha}_2^{L_2},\ldots, \tilde{\alpha}_2^1,\ldots,\tilde{\alpha}_2^{L_2})^T.
      \end{eqnarray}
      
      Note that $\tilde{\alpha}_1, \tilde{\alpha}_2$, and $\alpha_{1:2}$, by definition, satisfy the constraints (\ref{eq:overlapped_main_effect_constraints}) and (\ref{eq:overlapped_interaction_constraints}). This can be used to show, by direct calculation, that the four additive components in (\ref{eq:interaction_decomposition}) are mutually orthogonal, so that we can write
      \begin{eqnarray}
        \|\beta_{1:2}\|_2^2 &=& \|\ind\tilde{\mu}\|_2^2 + \|\mathbf{Z}_1\tilde{\alpha}_1\|_2^2 + \|\mathbf{Z}_2\tilde{\alpha}_2\|_2^2 + \|\alpha_{1:2}\|_2^2\\
        & = & L_1L_2\|\tilde{\mu}\|_2^2 + L_2\|\tilde{\alpha}_1\|_2^2 + L_1\|\tilde{\alpha}_2\|_2^2 + \|\alpha_{1:2}\|_2^2.
      \end{eqnarray}
      
      We have shown that the penalty in the group-lasso problem is equivalent to the penalty in the constrained overlapped group-lasso. It remains to show that the loss functions in both problems are also the same. Since $\mathbf{X}_{1:2}\mathbf{Z}_1=\mathbf{X}_1$ and $\mathbf{X}_{1:2}\mathbf{Z}_2=\mathbf{X}_2$, this can be seen by a direct computation:
      \begin{eqnarray}
        \mathbf{X}_{1:2}\beta_{1:2} & = & \mathbf{X}_{1:2}(\ind\tilde{\mu} + \mathbf{Z}_1\tilde{\alpha}_1 + \mathbf{Z}_2\tilde{\alpha}_2 + \alpha_{1:2})\\
        & = & \ind\tilde{\mu} + \mathbf{X}_1\tilde{\alpha}_1 + \mathbf{X}_2\tilde{\alpha}_2 + \mathbf{X}_{1:2}\alpha_{1:2}\\
        & = & [\ind\text{  }\mathbf{X}_1\text{  }\mathbf{X}_2\text{  }\mathbf{X}_{1:2}]\left[
          \begin{array}{c}
            \tilde{\mu}\\
            \tilde{\alpha}_1\\
            \tilde{\alpha}_2\\
            \alpha_{1:2}
          \end{array}
          \right]
      \end{eqnarray}
    \end{proof}

    Theorem \ref{theorem:main_theorem} shows that we can use the group-lasso to obtain estimates that satisfy strong hierarchy, without solving the overlapped group-lasso with constraints. The theorem also shows that the main effects and interactions can be extracted with
    \begin{eqnarray}
      \hat{\theta}_1 &=& \hat{\beta}_1 + \hat{\tilde{\alpha}}_1\\
      \hat{\theta}_2 &=& \hat{\beta}_2 + \hat{\tilde{\alpha}}_2\\
      \hat{\theta}_{1:2} &=& \hat{\alpha}_{1:2}.
    \end{eqnarray}
We discuss the properties of the \textsc{glinternet} estimates in the next section.
  \end{subsection}

  \begin{subsection}{Properties of the \textsc{glinternet} estimators}\label{sec:properties_of_the_glinternet_estimators}
    While \textsc{glinternet} treats the problem as a group-lasso, examining the equivalent overlapped group-lasso version makes it easier to draw insights about the behaviour of the method under various scenarios. Recall that the overlapped penalty for two variables is given by
    \begin{eqnarray}
      \|\alpha_1\|_2 + \|\alpha_2\|_2 + \sqrt{L_2\|\tilde{\alpha}_1\|_2^2 + L_1\|\tilde{\alpha}_2\|_2^2 + \|\alpha_{1:2}\|_2^2}.
    \end{eqnarray}
    If the ground truth is additive, i.e. $\alpha_{1:2}=0$, then $\tilde{\alpha}_1$ and $\tilde{\alpha}_2$ will be estimated to be zero. This is because for $L_1,L_2\geq2$ and $a,b\geq0$, we have
    \begin{eqnarray}
      \sqrt{L_2a^2+L_1b^2}\geq a+b.
    \end{eqnarray}
    Thus it is advantageous to place all the main effects in $\alpha_1$ and $\alpha_2$, because doing so results in a smaller penalty. Therefore, if the truth has no interactions, then \textsc{glinternet} picks out only main effects.

    If an interaction was present ($\alpha_{1:2}>0$), the derivative of the penalty term with respect to $\alpha_{1:2}$ is
    \begin{eqnarray}
      \frac{\alpha_{1:2}}{\sqrt{L_2\|\tilde{\alpha}_1\|_2^2 + L_1\|\tilde{\alpha}_2\|_2^2 + \|\alpha_{1:2}\|_2^2}}.
    \end{eqnarray}
    The presence of main effects allows this derivative to be smaller, thus allowing the algorithm to pay a smaller penalty (as compared to no main effects present) for making $\hat{\alpha}_{1:2}$ nonzero. This shows interactions whose main effects are also present are discovered before pure interactions.
  \end{subsection}
  
  \begin{subsection}{Interaction between a categorical variable and a continuous variable}
    We describe how to extend Theorem \ref{theorem:main_theorem} to interaction between a continuous variable and a categorical variable.
    
    Consider the case where we have a categorical variable $F$ with $L$ levels, and a continuous variable $Z$. Let $\mu_i=\E[Y|F=i,Z=z]$. There are four cases:
    \begin{itemize}
    \item $\mu_i = \mu$ (no main effects, no interactions)
    \item $\mu_i = \mu + \theta_1^i$ (main effect $F$)
    \item $\mu_i = \mu + \theta_1^i + \theta_2z$ (two main effects)
    \item $\mu_i = \mu + \theta_1^i + \theta_2z + \theta_{1:2}^iz$ (main effects and interaction)
    \end{itemize}
    As before, we impose the constraints $\sum_{i=1}^L\theta_1^i=0$ and $\sum_{i=1}^L\theta_{1:2}^i=0$. An overlapped group-lasso of the form
    \begin{multline}
      \label{eq:overlapped_group_lasso_categorical_continuous}
      \argmin_{\mu, \alpha, \tilde{\alpha}} \frac{1}{2}\left\|\mathbf{Y}-\mu\cdot\ind-\mathbf{X}\alpha_1-\mathbf{Z}\alpha_2-[\mathbf{X}\quad\mathbf{Z}\quad(\mathbf{X}*\mathbf{Z})]\left[
          \begin{array}{c}
            \tilde{\alpha}_1\\
            \tilde{\alpha}_2\\
            \alpha_{1:2}
          \end{array}
        \right]\right\|_2^2\\
      + \lambda\left(\|\alpha_1\|_2 + \|\alpha_2\|_2 + \sqrt{\|\tilde{\alpha}_1\|_2^2 + L\|\tilde{\alpha}_2\|_2^2+\|\alpha_{1:2}\|_2^2}\right)
    \end{multline}
    subject to
    \begin{eqnarray}
      \sum_{i=1}^{L}\alpha_1^i=0,\quad  \sum_{i=1}^{L}\tilde{\alpha}_1^i = 0,\quad \sum_{i=1}^{L}\alpha_{1:2}^i = 0
    \end{eqnarray}
    allows us to obtain estimates of the interaction term that satisfy strong hierarchy. This is again due to the nature of the square root term in the penalty. The actual main effects and interactions can be recovered as
    \begin{eqnarray}
      \hat{\theta}_1 = \hat{\alpha}_1 + \hat{\tilde{\alpha}}_1\\
      \hat{\theta}_2 = \hat{\alpha}_2 + \hat{\tilde{\alpha}}_2\\
      \hat{\theta}_{1:2} = \hat{\alpha}_{1:2}.
    \end{eqnarray}
    
    We have the following extension of Theorem \ref{theorem:main_theorem}:
    \begin{theorem}
      \label{theorem:main_theorem_extension_1}
      Solving the constrained overlapped group-lasso above is equivalent to solving
      \begin{eqnarray}
        \argmin_{\mu,\beta}\frac{1}{2}\left\|\mathbf{Y}-\mu\cdot\ind-\mathbf{X}\beta_1-\mathbf{Z}\beta_2-(\mathbf{X}*[\ind\quad\mathbf{Z}])\beta_{1:2}\right\|_2^2 + \lambda\left(\|\beta_1\|_2+\|\beta_2\|_2+\|\beta_{1:2}\|_2\right).
      \end{eqnarray}
    \end{theorem}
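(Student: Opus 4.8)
The plan is to follow the proof of Theorem~\ref{theorem:main_theorem} almost verbatim, replacing the categorical-categorical ANOVA decomposition by a ``constant-plus-linear-in-$Z$'' decomposition of the interaction block. First I would observe that the interaction design $\mathbf{X}*[\ind\quad\mathbf{Z}]$ is an $n\times 2L$ matrix whose first block of columns is $\mathbf{X}*\ind=\mathbf{X}$ and whose second block is $\mathbf{X}*\mathbf{Z}$. Writing $\beta_{1:2}=(b,c)\in\R^{2L}$, where $b_i$ multiplies the level-$i$ indicator and $c_i$ multiplies that indicator times $\mathbf{Z}$, the interaction fit is $\mathbf{X}b+(\mathbf{X}*\mathbf{Z})c$.

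As in Theorem~\ref{theorem:main_theorem}, I would first adjoin a redundant penalized intercept $\tilde{\mu}$ to the interaction group; by Lemma~\ref{lemma:zero_intercept} we have $\hat{\tilde{\mu}}=0$, so the solution is unchanged. I would then reparametrize each block by its mean and centered deviations: $b_i=\tilde{\mu}+\tilde{\alpha}_1^i$ with $\tilde{\mu}=\bar{b}$ and $\sum_i\tilde{\alpha}_1^i=0$, and $c_i=\tilde{\alpha}_2+\alpha_{1:2}^i$ with $\tilde{\alpha}_2=\bar{c}$ a scalar and $\sum_i\alpha_{1:2}^i=0$. Using $\mathbf{X}\ind_L=\ind$ and $(\mathbf{X}*\mathbf{Z})\ind_L=\mathbf{Z}$ (each row of $\mathbf{X}$ has a single $1$), this rewrites the interaction fit as $\tilde{\mu}\ind+\mathbf{X}\tilde{\alpha}_1+\tilde{\alpha}_2\mathbf{Z}+(\mathbf{X}*\mathbf{Z})\alpha_{1:2}$, which is exactly the loss of the overlapped group-lasso in (\ref{eq:overlapped_group_lasso_categorical_continuous}) together with its extra intercept.

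Next I would compute the penalty. Because the deviation vectors are zero-sum, the cross terms vanish and the squared norm splits blockwise: $\|\beta_{1:2}\|_2^2=\|b\|_2^2+\|c\|_2^2=(L\tilde{\mu}^2+\|\tilde{\alpha}_1\|_2^2)+(L\tilde{\alpha}_2^2+\|\alpha_{1:2}\|_2^2)$. Discarding the $L\tilde{\mu}^2$ term, which is zero at the optimum by Lemma~\ref{lemma:zero_intercept}, reproduces the square-root penalty $\sqrt{\|\tilde{\alpha}_1\|_2^2+L\|\tilde{\alpha}_2\|_2^2+\|\alpha_{1:2}\|_2^2}$. Finally, Lemma~\ref{lemma:zero_mean} applied to the categorical main effect $\mathbf{X}$ supplies the constraint $\sum_i\alpha_1^i=0$, while $\sum_i\tilde{\alpha}_1^i=0$ and $\sum_i\alpha_{1:2}^i=0$ hold by construction of the decomposition. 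Since the reparametrization is a norm- and loss-preserving bijection onto the constrained parameter space, the two problems share minimizers, and the recovery $\hat{\theta}_1=\hat{\beta}_1+\hat{\tilde{\alpha}}_1$, $\hat{\theta}_2=\hat{\beta}_2+\hat{\tilde{\alpha}}_2$, $\hat{\theta}_{1:2}=\hat{\alpha}_{1:2}$ follows.

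The step needing the most care is getting the asymmetric scaling right. Unlike the categorical-categorical case, where both deviation vectors pick up a level-count factor, here $\tilde{\alpha}_1$ (a length-$L$ vector of per-level constants) carries coefficient $1$, whereas $\tilde{\alpha}_2$ (a single continuous-slope scalar spread across all $L$ levels) carries coefficient $L$. This asymmetry traces directly to the fact that the continuous variable contributes the single column $\mathbf{Z}$ rather than $L$ indicator columns, and it is precisely what justifies the constants appearing inside the square root of (\ref{eq:overlapped_group_lasso_categorical_continuous}).
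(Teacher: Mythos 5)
Your proposal is correct and follows essentially the same route as the paper's proof: adjoin a penalized intercept $\tilde{\mu}$ (zero at the optimum by Lemma~\ref{lemma:zero_intercept}), split $\beta_{1:2}$ into the two $L$-blocks, apply the mean-plus-centered-deviation (ANOVA) decomposition to each block, and verify that both the penalty and the fitted values match, with the zero-sum property killing the cross terms in the norm. Your explicit remarks on the identities $\mathbf{X}\ind_L=\ind$, $(\mathbf{X}*\mathbf{Z})\ind_L=\mathbf{Z}$ and on the asymmetric scaling ($1$ versus $L$) simply make transparent what the paper leaves implicit by saying ``we proceed as in the proof of Theorem~\ref{theorem:main_theorem}.''
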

    
    \begin{proof}
      We proceed as in the proof of Theorem \ref{theorem:main_theorem} and introduce an additional parameter $\tilde{\mu}$ into the overlapped objective:
      \begin{multline}
        \argmin_{\mu, \tilde{\mu}, \alpha, \tilde{\alpha}} \frac{1}{2}\left\|\mathbf{Y}-\mu\cdot\ind-\mathbf{X}\alpha_1-\mathbf{Z}\alpha_2-[\ind\quad \mathbf{X}\quad\mathbf{Z}\quad(\mathbf{X}*\mathbf{Z})]\left[
            \begin{array}{c}
              \tilde{\mu}\\
              \tilde{\alpha}_1\\
              \tilde{\alpha}_2\\
              \alpha_{1:2}
            \end{array}
          \right]\right\|_2^2\\
        + \lambda\left(\|\alpha_1\|_2 + \|\alpha_2\|_2 + \sqrt{L\|\tilde{\mu}\|_2^2+\|\tilde{\alpha}_1\|_2^2 + L\|\tilde{\alpha}_2\|_2^2+\|\alpha_{1:2}\|_2^2}\right)
      \end{multline}
      As before, this does not change the solutions because we will have $\hat{\tilde{\mu}}= 0$ (see Lemma \ref{lemma:zero_intercept}).
      
      Decompose the $2L$-vector $\beta_{1:2}$ into
      \begin{eqnarray}
        \left[
          \begin{array}{c}
            \eta_1\\
            \eta_2
          \end{array}
        \right],
      \end{eqnarray}
      where $\eta_1$ and $\eta_2$ both have dimension $L\times1$. Apply the anova decomposition to both to obtain
      \begin{eqnarray}
        \eta_1^i &=& \eta_1^\cdot + (\eta_1^i - \eta_1^\cdot)\\
        &\equiv& \tilde{\mu} + \tilde{\alpha}_1^i
      \end{eqnarray}
      and
      \begin{eqnarray}
        \eta_2^i& = & \eta_2^\cdot + (\eta_2^i-\eta_2^\cdot)\\
        &\equiv& \tilde{\alpha}_2 + \alpha_{1:2}^i.
      \end{eqnarray}
      Note that $\tilde{\alpha}_1$ is a $(L\times1)$-vector that satisfies $\sum_{i=1}^L\tilde{\alpha}_2^i=0$, and likewise for $\alpha_{1:2}$. This allows us to write
      \begin{eqnarray}
        \beta_{1:2} = \left[
          \begin{array}{c}
            \tilde{\mu}\cdot\ind_{L\times1}\\
            \tilde{\alpha}_2\cdot\ind_{L\times1}
          \end{array}
          \right] + \left[
          \begin{array}{c}
            \tilde{\alpha}_1\\
            \alpha_{1:2}
          \end{array}
          \right]
      \end{eqnarray}
      It follows that
      \begin{eqnarray}
        \|\beta_{1:2}\|_2^2 = L\|\tilde{\mu}\|_2^2 + \|\tilde{\alpha}_1\|_2^2 + L\|\tilde{\alpha}_2\|_2^2 + \|\alpha_{1:2}\|_2^2,
      \end{eqnarray}
      which shows that the penalties in both problems are equivalent. A direct computation shows that the loss functions are also equivalent:
      \begin{eqnarray}
        (\mathbf{X}*[\ind\quad\mathbf{Z}])\beta_{1:2} &=& [\mathbf{X}\quad(\mathbf{X}*\mathbf{Z})]\beta_{1:2}\\
        & =& [\mathbf{X} \quad(\mathbf{X}*\mathbf{Z})]\left(\left[ \begin{array}{c}
              \tilde{\mu}\cdot\ind_{L\times1}\\
              \tilde{\alpha}_2\cdot\ind_{L\times1}
            \end{array}
          \right] + \left[
            \begin{array}{c}
              \tilde{\alpha}_1\\
              \alpha_{1:2}
            \end{array}
          \right]
          \right)\\
          & = & \tilde{\mu}\cdot\ind + \mathbf{X}\tilde{\alpha}_1 + \mathbf{Z}\tilde{\alpha}_2 + (\mathbf{X}*\mathbf{Z})\alpha_{1:2}.
      \end{eqnarray}
    \end{proof}
    
    Theorem \ref{theorem:main_theorem_extension_1} allows us to accomodate interactions between continuous and categorical variables by simply parametrizing the interaction term as $\mathbf{X} *[\ind\quad\mathbf{Z}]$, where $\mathbf{X}$ is the indicator matrix representation for categorical variables that we have been using all along. We then proceed as before with a group-lasso.
  \end{subsection}

  \begin{subsection}{Interaction between two continuous variables}
    We have seen that the appropriate representations for the interaction terms are
    \begin{itemize}
    \item $\mathbf{X}_1*\mathbf{X}_2 = \mathbf{X}_{1:2}$ for categorical variables
    \item $\mathbf{X}*[\ind\quad \mathbf{Z}]=[\mathbf{X} \quad(\mathbf{X}*\mathbf{Z})]$ for one categorical variable and one continuous variable.
    \end{itemize}
    How should we represent the interaction between two continuous variables? Let $Z_1$ and $Z_2$ be two continuous variables. One might guess by now that the appropriate form of the interaction term is given by
    \begin{eqnarray}
      \mathbf{Z}_{1:2} &=& [\ind\quad \mathbf{Z}_1]*[\ind\quad \mathbf{Z}_2]\\
      & = & [\ind\quad \mathbf{Z}_1\quad\mathbf{Z}_2\quad(\mathbf{Z}_1*\mathbf{Z}_2)].
    \end{eqnarray}
    This is indeed the case. A linear interaction model for $\mathbf{Z}_1$ and $\mathbf{Z}_2$ is given by
    \begin{eqnarray}
      \E[Y|Z_1=z_1,Z_2=z_2] = \mu + \theta_1z_1 + \theta_2z_2 + \theta_{1:2}z_1z_2.
    \end{eqnarray}
    Unlike the previous cases where there were categorical variables, there are no constraints on any of the coefficients. It follows that the overlapped group-lasso
    \begin{multline}
      \argmin_{\mu,\tilde{\mu},\alpha,\tilde{\alpha}}\frac{1}{2}\left\|\mathbf{Y}-\mu\cdot\ind-\mathbf{Z}_1\alpha_1-\mathbf{Z}_2\alpha_2-[\ind\quad\mathbf{Z}_1\quad\mathbf{Z}_2\quad(\mathbf{Z}_1*\mathbf{Z}_2)]\left[
        \begin{array}{c}
          \tilde{\mu}\\
          \tilde{\alpha}_1\\
          \tilde{\alpha}_2\\
          \alpha_{1:2}
        \end{array}
        \right]\right\|_2^2\\
        + \lambda\left(\|\alpha_1\|_2+\|\alpha_2\|_2+\sqrt{\|\tilde{\mu}\|_2^2+\|\tilde{\alpha}_1\|_2^2+\|\tilde{\alpha}_2\|_2^2+\|\alpha_{1:2}\|_2^2}\right)
    \end{multline}
    is trivially equivalent to
    \begin{eqnarray}
      \argmin_{\mu,\beta}\frac{1}{2}\left\|\mathbf{Y}-\mu\cdot\ind-\mathbf{Z}_1\beta_1-\mathbf{Z}_2\beta_2-([\ind\quad\mathbf{Z}_1]*[\ind\quad\mathbf{Z}_2])\beta_{1:2}\right\|_2^2 + \lambda\left(\|\beta_1\|_2+\|\beta_2\|_2+\|\beta_{1:2}\|_2\right),
    \end{eqnarray}
    with the $\beta$'s taking the place of the $\alpha$'s. Note that we will have $\hat{\tilde{\mu}}=0$.
  \end{subsection}
\end{section}

\begin{section}{Variable screening}\label{sec:screening}
  \textsc{glinternet} works by solving a group-lasso with $p+
  \begin{pmatrix}
    p\\
    2
  \end{pmatrix}$
  groups of variables. Even for moderate $p$ ($\sim 10^5$), we will require some form of screening to reduce the dimension of the interaction search space. We have argued that models satisfying hierarchy make sense, so that it is natural to consider screening devices that hedge on the presence of main effects. We discuss two screening methods in this section: gradient boosting, and an adaptive screen based on the strong rules of \cite{Tibshirani:2012:JRSS}. We describe the boosting approach first.

  \begin{subsection}{Screening with boosted trees}\label{sec:screening_with_boosted_trees}
    AdaBoost \cite{Freund:1995:DGO} and gradient boosting \cite {Friedman:2001} are effective approaches to building ensembles of weak learners such as decision trees. One of the advantages of trees is that they are able to model nonlinear effects and high-order interactions. For example, a depth-2 tree essentially represents an interaction between the variables involved in the two splits, which suggests that boosting with depth-2 trees is a way of building a first-order interaction model. Note that the interactions are hierarchical, because in finding the optimal first split, the boosting algorithm is looking for the best main effect. The subsequent split is then made, conditioned on the first split.

    If we boost with $T$ trees, then we end up with a model that has at most $T$ interaction pairs. The following diagram gives a schematic of the boosting iterations with categorical variables.

    \begin{multline}
      \xymatrix{
        &&*+[o][F-]{F_1}\ar[dl]_{\{1\}}\ar[dr]^{\{2,3\}}&\\
        &*+[o][F-]{F_2}\ar[dl]_{\{1,2\}}\ar[dr]^{\{3\}}&&*+[F-]{}\\
        *+[F-]{}&&*+[F-]{}&&
      } + \xymatrix{
        &&*+[o][F-]{F_{11}}\ar[dl]_{\{2\}}\ar[dr]^{\{1,3\}}&\\
        &*+[o][F-]{F_{23}}\ar[dl]_{\{3\}}\ar[dr]^{\{1,2\}}&&*+[F-]{}\\
        *+[F-]{}&&*+[F-]{}&&
      }
    \end{multline}

    In the first tree, levels 2 and 3 of $F_1$ are not involved in the interaction with $F_2$. Therefore each tree in the boosted model does not represent an interaction among all the levels of the two variables, but only among a subset of the levels. To enforce the full interaction structure, one could use fully-split trees, but we do not develop this approach for two reasons. First, boosting is a sequential procedure and is quite slow even for moderately sized problems. Using fully split trees will further degrade its runtime. Second, in variables with many levels, it is reasonable to expect that the interactions only occur among a few of the levels. If this were true, then a complete interaction that is weak for every combination of levels might be selected over a strong partial interaction. But it is the strong partial interaction that we are interested in.
    
    Boosting is feasible because it is a greedy algorithm. If $p$ is the number of variables, an exhaustive search involves $\mathcal{O}(p^2)$ variables, whereas boosting operates with $\mathcal{O}(p)$. To use the boosted model as a screening device for interaction candidates, we take the set of all unique interactions from the collection of trees. For example, in our schematic above, we would add $F_{1:2}$ and $F_{11:23}$ to our candidate set of interactions.

    In our experiments, using boosting as a screen did not perform as well as we hoped. There is the issue of selecting tuning parameters such as the amount of shrinkage and the number of trees to use. Lowering the shrinkage and increasing the number of trees improves false discovery rates, but at a significant cost to speed. In the next section, we describe a screening approach that is based on computing inner products that is efficient and that can be integrated with the strong rules for the group lasso.
  \end{subsection}
  
  \begin{subsection}{An adaptive screening procedure}\label{sec:subsection_adaptive_screening}
    The strong rules \cite{Tibshirani:2012:JRSS} for lasso-type problems are effective heuristics for discarding large numbers of variables that are likely to be redundant. As a result, the strong rules can dramatically speed up the convergence of algorithms because they can concentrate on a smaller set (we call this the \textit{strong set}) of variables that are more likely to be nonzero. The strong rules are not safe, however, meaning that it is possible that some of the discarded variables are actually supposed to be nonzero. Because of this, after our algorithm has converged on the strong set, we have to check the KKT conditions on the discarded set. Those variables that do not satisfy the conditions then have to be added to the current set of nonzero variables, and we fit on this expanded set. This happens rarely in our experience, i.e. the discarded variables tend to remain zero after the algorithm has converged on the strong set, which means we rarely have to do multiple rounds of fitting for any given value of the regularization paramter $\lambda$.

    The strong rules for the group-lasso involve computing $s_i=\|\mathbf{X}_i^T(\mathbf{Y}-\hat{\mathbf{Y}})\|_2$ for every group of variables $\mathbf{X}_i$, and then discarding a group $i$ if $s_i<2\lambda_{current}-\lambda_{previous}$. If this is feasible for all $p+
    \begin{pmatrix}
      p\\
      2
    \end{pmatrix}$
    groups, then there is no need for screening; we simply fit the group-lasso on those groups that pass the strong rules filter. Otherwise, we approximate this by screening only on the groups that correspond to main effects. We then take the candidate set of interactions to consist of all pairwise interactions between the variables that passed this screen. Note that because the KKT conditions for group $i$ are (see Section \ref{sec:group_lasso})
    \begin{eqnarray}
      s_i < \lambda & \text{if} \quad \hat{\beta}_i=0\\
      s_i = \lambda & \text{if} \quad \hat{\beta}_i\neq0,
    \end{eqnarray}
    we will have already computed the $s_i$ for the strong rules from checking the KKT conditions for the solutions at the previous $\lambda$. This allows us to integrate screening with the strong rules in an efficient manner. An example will illustrate.

    Suppose we have 10,000 variables ($\sim50\times10^6$ possible interactions), but we are computationally limited to a group-lasso with $10^6$ groups. Assume we have the fit for $\lambda=\lambda_k$, and want to move on to $\lambda_{k+1}$. Let $r_{\lambda_k}=\mathbf{Y}-\hat{\mathbf{Y}}_{\lambda_k}$ denote the current residual. At this point, the variable scores $s_i=\|\mathbf{X}_i^Tr_{\lambda_k}\|_2$ have already been computed from checking the KKT conditions at the solutions for $\lambda_k$. We restrict ourselves to the 10,000 variables, and take the 100 with the highest scores. Denote this set by $\mathcal{T}_{100}^{\lambda_{k+1}}$. The candidate set of variables for the group-lasso is then given by $\mathcal{T}_{100}^{\lambda_{k+1}}$ together with the pairwise interactions between \textit{all} 10,000 variables and $\mathcal{T}_{100}^{\lambda_{k+1}}$. Because this gives a candidate set with about $100\times 10,000=10^6$ terms, the compuation is now feasible. We then compute the group-lasso on this candidate set, and repeat the procudure with the new residual $r_{\lambda_k+1}$.

    This screen is easy to compute since it is based on inner products. Moreover, they can be computed in parallel. The procedure also integrates well with the strong rules by reusing inner products computed from the fit for a previous $\lambda$.
  \end{subsection}
\end{section}

\begin{section}{Related work and approaches}\label{sec:related_work}
  We describe some past and related approaches to discovering interactions. We give a short synopsis of how they work, and say why they are inadequate for our purposes. The method most similar to ours is hierNet.

  \begin{subsection}{Logic regression \cite{Kooperberg:2003:JCGS}}
    Logic regression finds boolean combinations of variables that have high predictive power of the response variable. For example, a combination might look like
    \begin{eqnarray}
      (F_1 \text{ and } F_3) \text{ or } F_5.
    \end{eqnarray}
    This is an example of an interaction that is of higher-order than what \textsc{glinternet} handles, and is an appealing aspect of logic regression. However, logic regression does not accomodate continuous variables or categorical variables with more than two levels. We do not make comparisons with logic regression in our simulations for this reason.
  \end{subsection}

  \begin{subsection}{Composite absolute penalties \cite{Yu:2009:AnnalsStat}}
    Like \textsc{glinternet}, this is also a penalty-based approach. CAP employs penalties of the form
    \begin{eqnarray}
      \|(\beta_i,\beta_j)\|_{\gamma_1} + \|\beta_j\|_{\gamma_2}
    \end{eqnarray}
    where $\gamma_1 > 1$. Such a penalty ensures that $\hat{\beta}_i\neq0$ whenever $\hat{\beta}_j\neq0$. It is possible that $\hat{\beta}_i\neq0$ but $\hat{\beta}_j=0$. In other words, the penalty makes $\hat{\beta}_j$ hierarchically dependent on $\hat{\beta}_i$: it can only be nonzero after $\hat{\beta}_i$ becomes nonzero. It is thus possible to use CAP penalties to build interaction models that satisfy hierarchy. For example, a penalty of the form $\|(\theta_1,\theta_2,\theta_{1:2})\|_2 + \|\theta_{1:2}\|_2$ will result in estimates that satisfy $\hat{\theta}_{1:2}\neq0\Longrightarrow\hat{\theta}_1\neq0\text{ and }\hat{\theta}_2\neq0$. We can thus build a linear interaction model for two categorical variables by solving
    \begin{eqnarray}
      \argmin_{\mu,\theta}\frac{1}{2}\left\|\mathbf{Y}-\mu\cdot\ind-\mathbf{X}_1\theta_1-\mathbf{X}_2\theta_2-\mathbf{X}_{1:2}\theta_{1:2}\right\|_2^2 + \lambda(\|(\theta_1,\theta_2,\theta_{1:2})\|_2 + \|\theta_{1:2}\|_2)
    \end{eqnarray}
    subject to (\ref{eq:main_effect_constraints}) and (\ref{eq:interaction_constraints}). We see that the CAP approach differs from \textsc{glinternet} in that we have to solve a constrained optimization problem which is considerably more complicated, thus making it unclear if CAP will be computationally feasible for larger problems. The form of the penalties are also different: the interaction coefficient in CAP is penalized twice, whereas \textsc{glinternet} penalizes it once. It is not obvious what the relationship between the two algorithms' solutions would be.
  \end{subsection}

  \begin{subsection}{hierNet \cite{Bien:2013:AnnalsStat}}
    This is a method that, like \textsc{glinternet}, seeks to find interaction estimates that obey hierarchy with regularization. The optimization problem that hierNet solves is
    \begin{eqnarray}\label{eq:hierNet}
      \argmin_{\mu,\beta,\theta}\frac{1}{2}\sum_{i=1}^n(y_i-\mu-x_i^T\beta-\frac{1}{2}x_i^T\theta x_i)^2 + \lambda\ind^T(\beta^++\beta^-) +\frac{\lambda}{2}\|\theta\|_1
    \end{eqnarray}
    subject to
    \begin{eqnarray}
      \theta=\theta^T, \|\theta_j\|_1\leq\beta_j^++\beta_j^-, \beta_j^+\geq0, \beta_j^-\geq0.
    \end{eqnarray}
    The main effects are represented by $\beta$, and interactions are given by $\theta$. The first constraint enforces symmetry in the interaction coefficients. $\beta_j^+$ and $\beta_j^-$ are the positive and negative parts of $\beta_j$, and are given by $\beta_j^+=\max(0, \beta_j)$ and $\beta_j^-=-\min(0, \beta_j)$ respectively. The constraint $\|\theta_j\|_1\leq\beta_j^++\beta_j^-$ implies that if some components of the $j$-th row of $\theta$ are estimated to be nonzero, then the main effect $\beta_j$ will also be estimated to be nonzero. Since $\theta_j$ corresponds to interactions between the $j$-th variable and all the other variables, this implies that the solutions to the hierNet objective satisfy weak hierarchy. One can think of $\beta_j^+ + \beta_j^-$ as a budget for the amount of interactions that are allowed to be nonzero.

    The hierNet objective can be modified to obtain solutions that satisfy strong hierarchy, which makes it in principle comparable to \textsc{glinternet}. Currently, hierNet is only able to accomodate binary and continuous variables, and is practically limited to fitting models with fewer than 1000 variables.
  \end{subsection}
\end{section}

\begin{section}{Simulation study}\label{sec:simulation_study}
  We perform simulations to see if \textsc{glinternet} is competitive with existing methods. hierNet is a natural benchmark because it also tries to find interactions subject to hierarchical constraints. Because hierNet only works with continuous variables and 2-level categorical variables, we include gradient boosting as a competitor for the scenarios where hierNet cannot be used.
  \begin{subsection}{False discovery rates}\label{sec:false_discovery_rates}
    We simulate 4 different setups:
    \begin{enumerate}
    \item Truth obeys strong hierarchy. The interactions are only among pairs of nonzero main effects.
    \item Truth obeys weak hierarchy. Each interaction has only one of its main effects present.
    \item Truth is anti-hierarchical. The interactions are only among pairs of main effects that are not present.
    \item Truth is pure interaction. There are no main effects present, only interactions.
    \end{enumerate}
    Each case is generated with $n=500$ observations and $p=30$ continuous variables, with a signal to noise ratio of 1. Where applicable, there are 10 main effects and/or 10 interactions in the ground truth. The interaction and main effect coefficients are sampled from $N(0,1)$, so that the variance in the observations should be split equally between main effects and interactions.

  Boosting is done with 5000 depth-2 trees and a learning rate of 0.001. Each tree represents a candidate interaction, and we can compute the improvement to fit due to this candidate pair. Summing up the improvement over the 5000 trees gives a score for each interaction pair, which can then be used to order the pairs. We then compute the false discovery rate as a function of rank. For \textsc{glinternet} and hierNet, we obtain a path of solutions and compute the false discovery rate as a function of the number of interactions discovered. The default setting for hierNet is to impose weak hierarchy, and we use this except in the cases where the ground truth has strong hierarchy. In these cases, we set hierNet to impose strong hierarchy. We also set ``diagonal=FALSE'' to disable quadratic terms.
  
  We plot the average false discovery rate with standard error bars as a function of the number of predicted interactions in Figure \ref{fig:simulation_results_continuous}. The results are from 100 simulation runs. We see that \textsc{glinternet} is competitive with hierNet when the truth obeys strong or weak hierarchy, and does better when the truth is anti-hierarchical. This is expected because hierNet requires the presence of main effects as a budget for interactions, whereas \textsc{glinternet} can still esimate an interaction to be nonzero even though none of its main effects are present. Boosting is not competitive, especially in the anti-hierarchical case. This is because the first split in a tree is effectively looking for a main effect.\\
  \begin{figure}
    \begin{center}
      \centerline{\includegraphics[height=3in, width=3in]{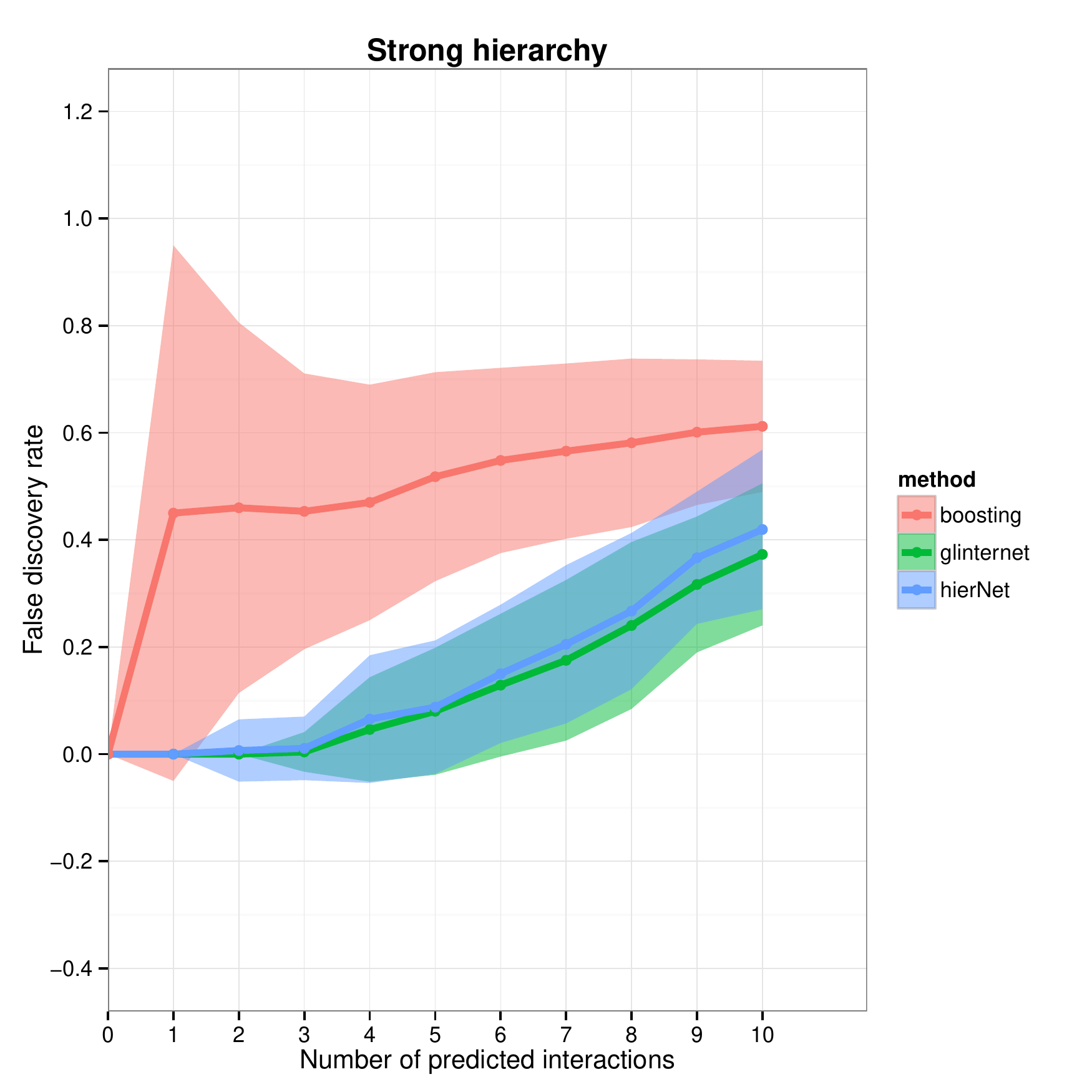}
        \includegraphics[height=3in, width=3in]{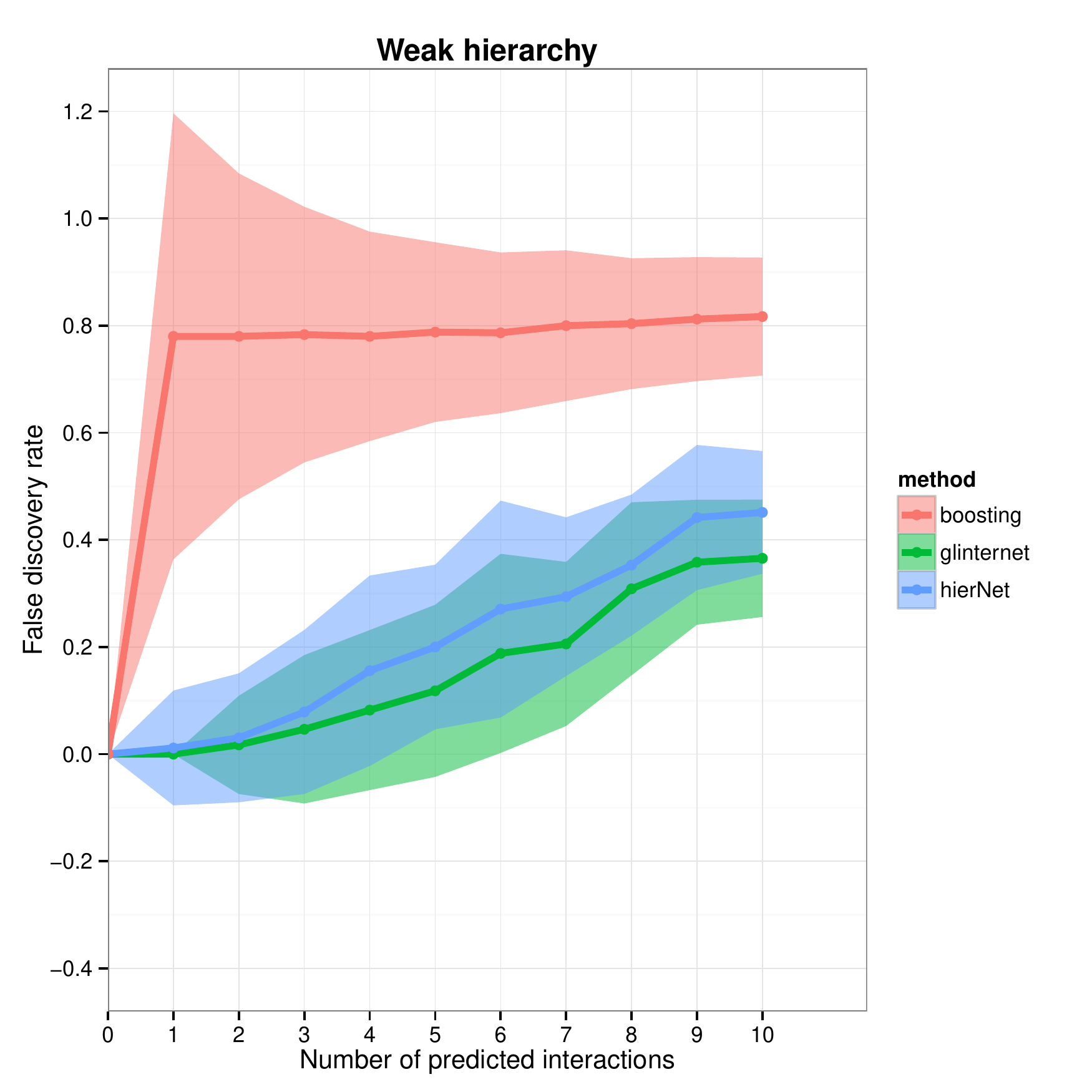}}
      \centerline{\includegraphics[height=3in, width=3in]{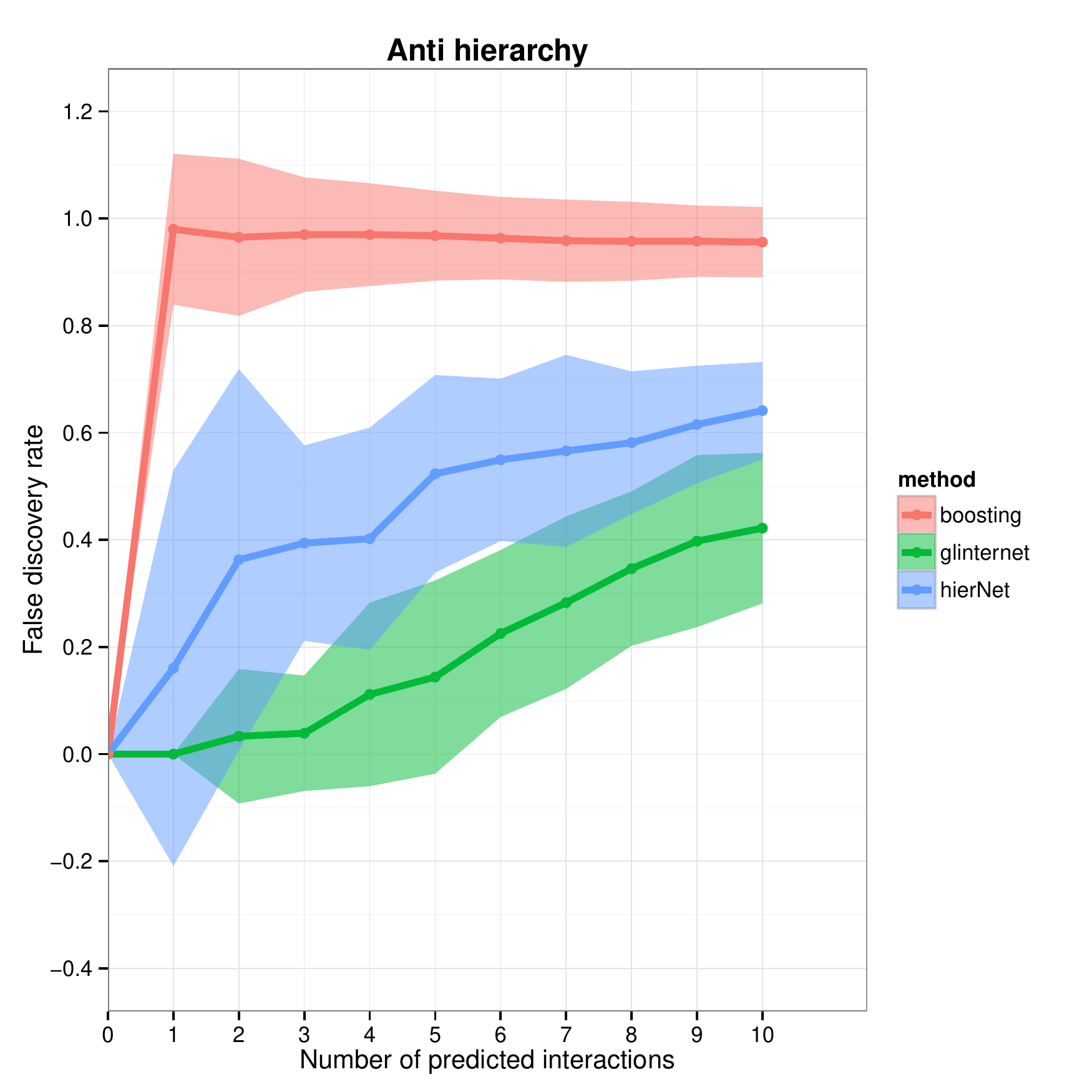}
        \includegraphics[height=3in, width=3in]{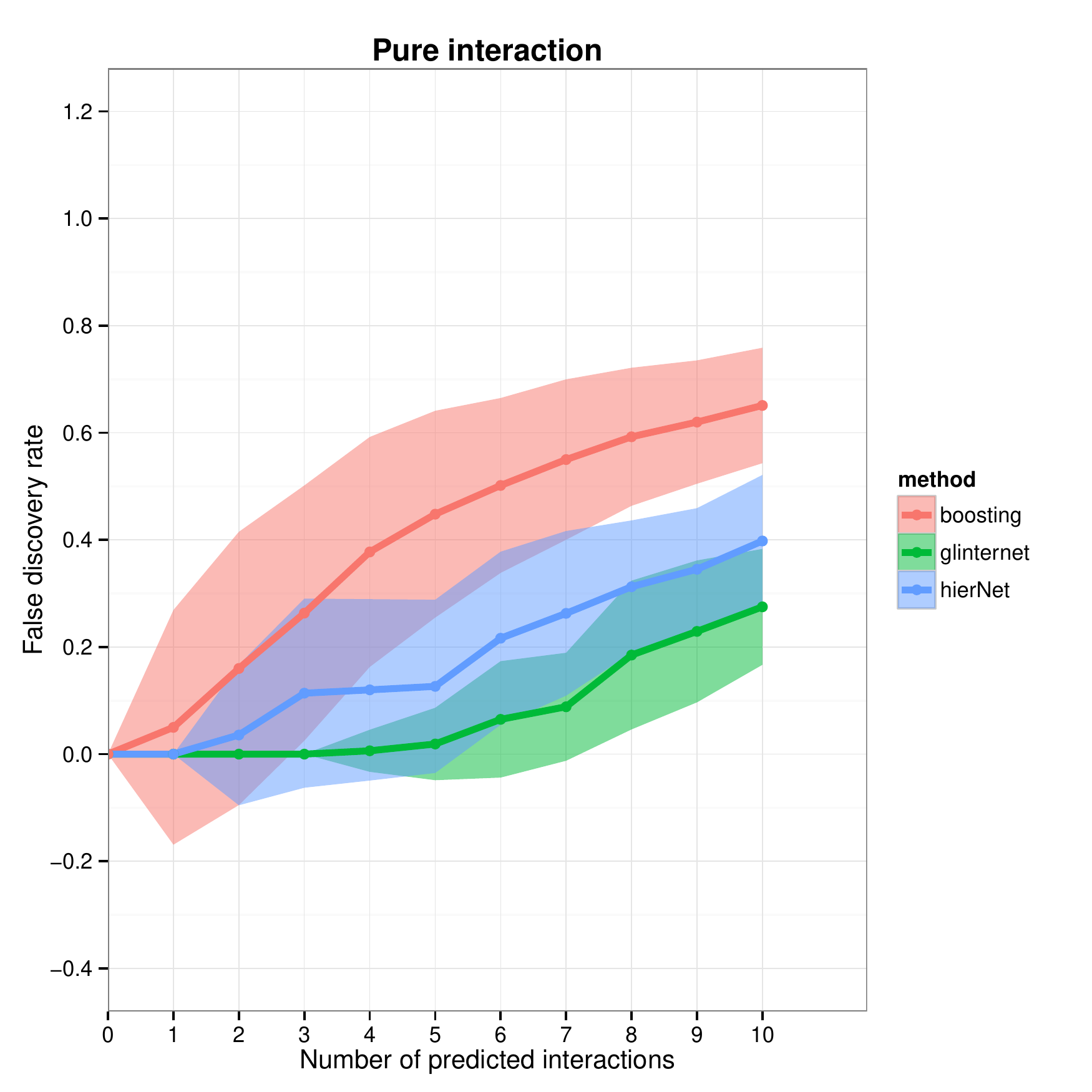}}
    \end{center}
    \caption{Simulation results for continuous variables: Average false discovery rate and standard errors from 100 simulation runs.}
    \label{fig:simulation_results_continuous}
  \end{figure}\\
  Both \textsc{glinternet} and hierNet perform comparably in these simulations. If all the variables are continuous, there do not seem to be compelling reasons to choose one over the other.
  \end{subsection}

  \begin{subsection}{Feasibility}\label{sec:feasibility}
    To the best of our knowledge, hierNet is the only readily available package for learning interactions among continuous variables in a hierarchical manner. Therefore it is natural to use hierNet as a speed benchmark. We generate data in which the ground truth has strong hierarchy as in Section \ref{sec:false_discovery_rates}, but with $n=1000$ quantitative observations and $p=20, 40, 80, 160, 320, 640$ continuous variables. We set each method to find 10 interactions. While hierNet does not allow the user to specify the number of interactions to discover, we get around this by fitting a path of values, then selecting the regularization parameter that corresponds to 10 nonzero estimated interactions. We then refit hierNet along a path that terminates with this choice of parameter, and time this run. Both software packages are compiled with the same options. Figure \ref{fig:comparison_timing} shows the best time recorded for each method over 10 runs. These simulations were timed on a Intel Core-i7 3930K processor.
  \begin{figure}
    \begin{center}
      \centerline{\includegraphics[height=3in, width=3in]{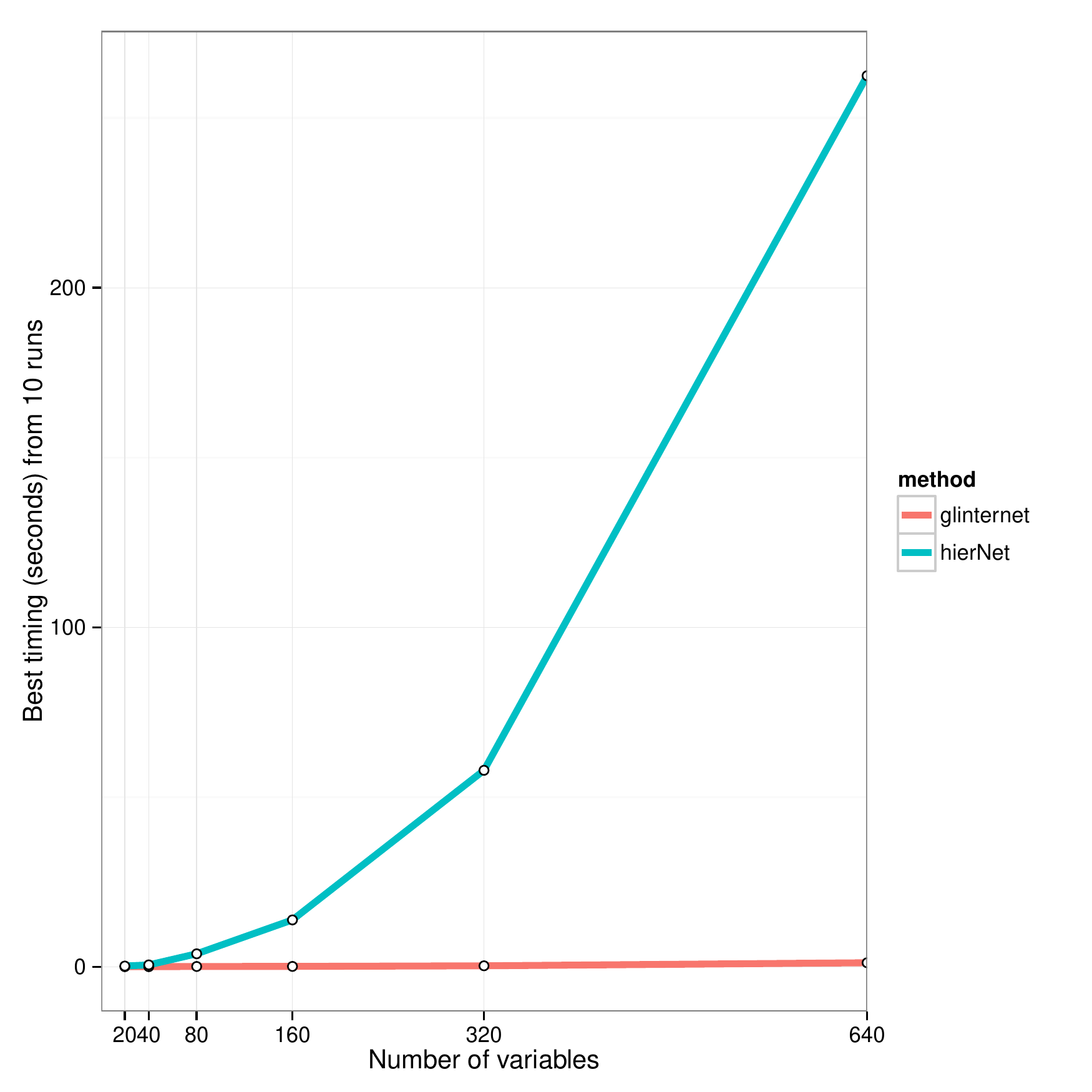}
        \includegraphics[height=3in, width=3in]{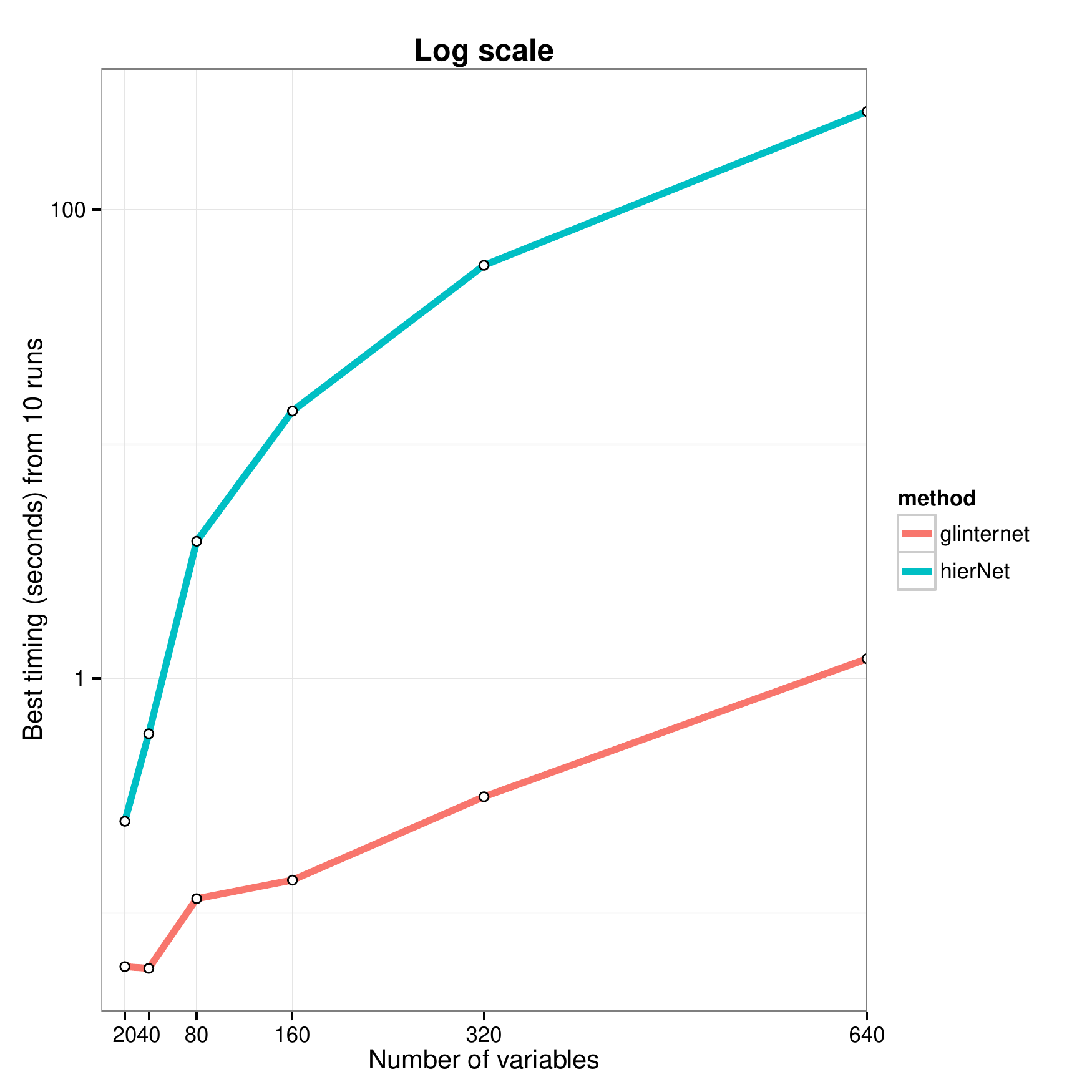}}
      \end{center}
      \caption{\textbf{Left:} Best wallclock time over 10 runs for discovering 10 interactions. \textbf{Right:} log scale.}
      \label{fig:comparison_timing}
    \end{figure}
  \end{subsection}
\end{section}

\begin{section}{Real data examples}\label{sec:real_data}
  We compare the performance of \textsc{glinternet} on several prediction problems. The competitor methods used are gradient boosting, lasso, ridge regression, and hierNet where feasible. In all situations, we determine the number of trees in boosting by first building a model with a large number of trees, typically 5000 or 10000, and then selecting the number that gives the lowest cross-validated error. We use a learning rate of 0.001, and we do not subsample the data since the sample sizes are small in all the cases.

  The methods are evaluated on three measures:
  \begin{enumerate}
  \item missclassification error, or 0-1 loss
  \item area under the receiver operating characteristic (ROC) curve, or auc
  \item cross entropy, given by $-\frac{1}{n}\sum_{i=1}^n\left[y_i\log(\hat{y}_i) + (1-y_i)\log(1-\hat{y}_i)\right]$.
  \end{enumerate}

  \begin{subsection}{South African heart disease data}
    The data consists of 462 males from a high risk region for heart disease in South Africa. The task is to predict which subjects had coronary heart disease using risk factors such as cumulative tobacco, blood pressure, and family history of heart disease. We randomly split the data into 362-100 train-test examples, and tuned each method on the training data using 10-fold cross validation before comparing the prediction performance on the held out test data. This splitting process was carried out 20 times, and Figure \ref{fig:comparison_saheart} summarizes the results. The methods are all comparable, with no distinct winner.
    \begin{figure}
      \begin{center}
        \includegraphics[height=3in, width=3in]{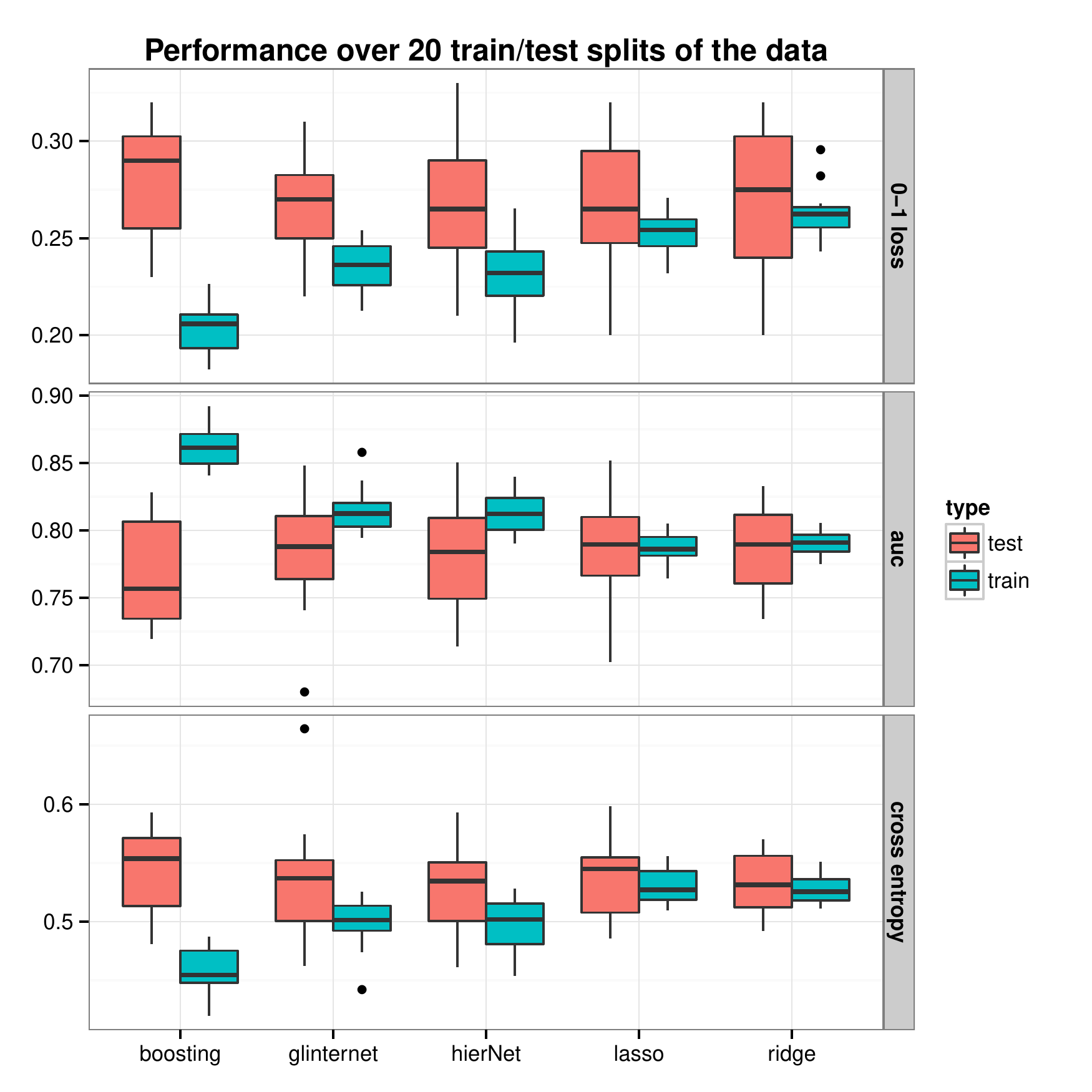}
      \end{center}
      \caption{Performance of methods on 20 train-test splits of the South African heart disease data.}
      \label{fig:comparison_saheart}
    \end{figure}
  \end{subsection}
  \begin{subsection}{Spambase}
    This is the Spambase data taken from the UCI Machine Learning Repository. There are 4601 binary observations indicating whether an email is spam or non-spam, and 57 integer-valued variables. All the features are log-transformed by $\log(1+x)$ before applying the methods.  We split the data into a training set consisting of 3065 observations and a test set consisting of 1536 observations. The methods are tuned on the training set using 10-fold cross validation before predicting on the test set. The results are shown in Figure \ref{fig:comparison_spam}.
    \begin{figure}
      \begin{center}
        \includegraphics[height=3in, width=3in]{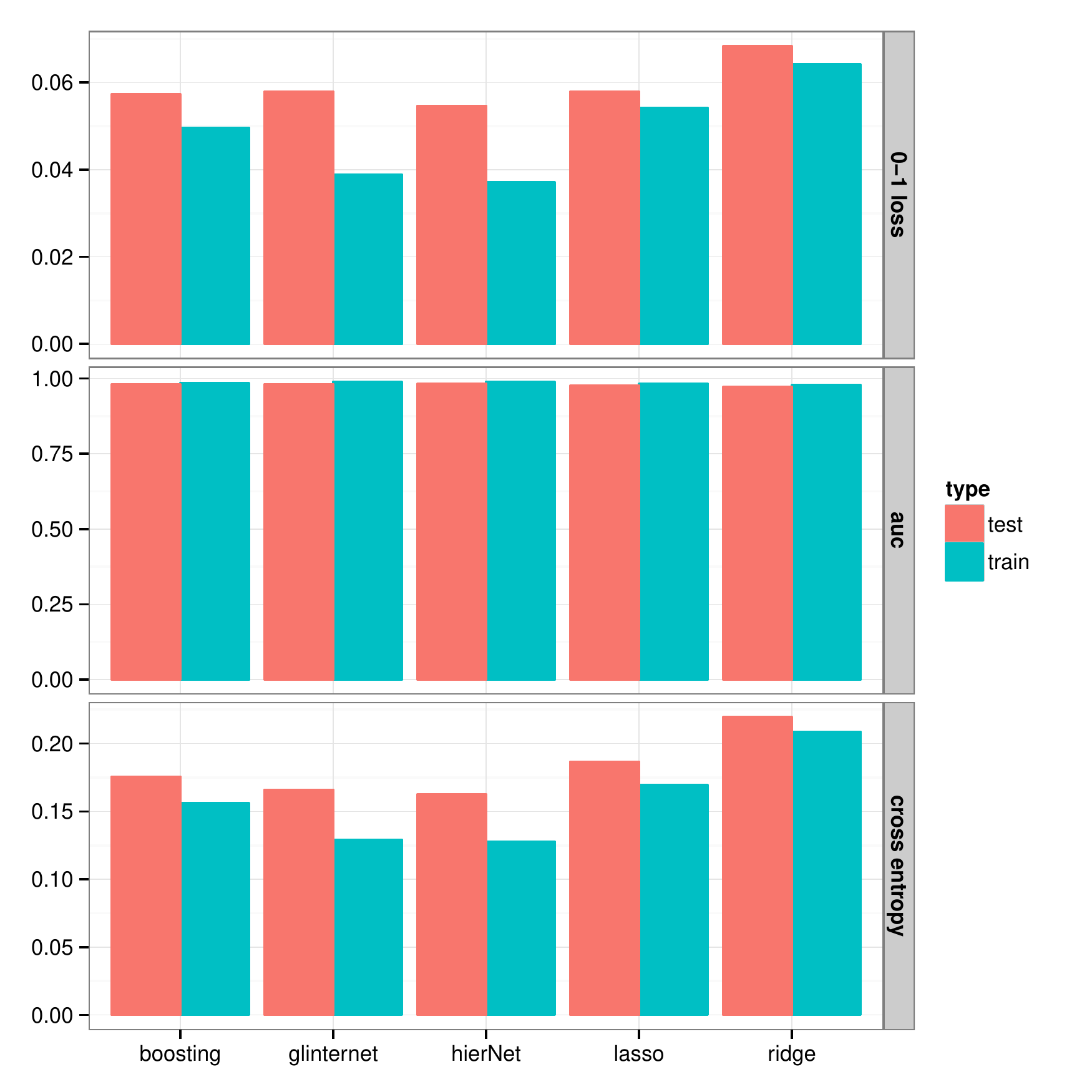}
      \end{center}
      \caption{Performance on the Spambase data.}
      \label{fig:comparison_spam}
    \end{figure}
  \end{subsection}
  \begin{subsection}{Dorothea}
    Dorothea is one of the 5 datasets from the NIPS 2003 Feature Learning Challenge, where the goal is to predict if a chemical molecule will bind to a receptor target. There are 100000 binary features that describe three-dimensional properties of the molecules, half of which are probes that have nothing to do with the response. The training, validation, and test sets consist of 800, 350, and 800 observations respectively. More details about how the data were prepared can be found at http://archive.ics.uci.edu/ml/datasets/Dorothea.

    We run \textsc{glinternet} with screening on 1000 main effects, which results in about 100 million candidate interaction pairs. The validation set was used to tune all the methods. We then predict on the test set with the chosen models and submitted the results online for scoring. The best model chosen by \textsc{glinternet} made use of 93 features, compared with the 9 features chosen by $L_1$-penalized logistic regression (lasso). Figure \ref{fig:comparison_dorothea} summarizes the performance for each method. We see that \textsc{glinternet} has a slight advantage over the lasso, indicating that interactions might be important for this problem. Boosting did not perform well in our false discovery rate simulations, which could be one of the reasons why it does not do well here despite taking interactions into account.\\
    \begin{figure}
      \begin{center}
        \includegraphics[height=3in, width=3in]{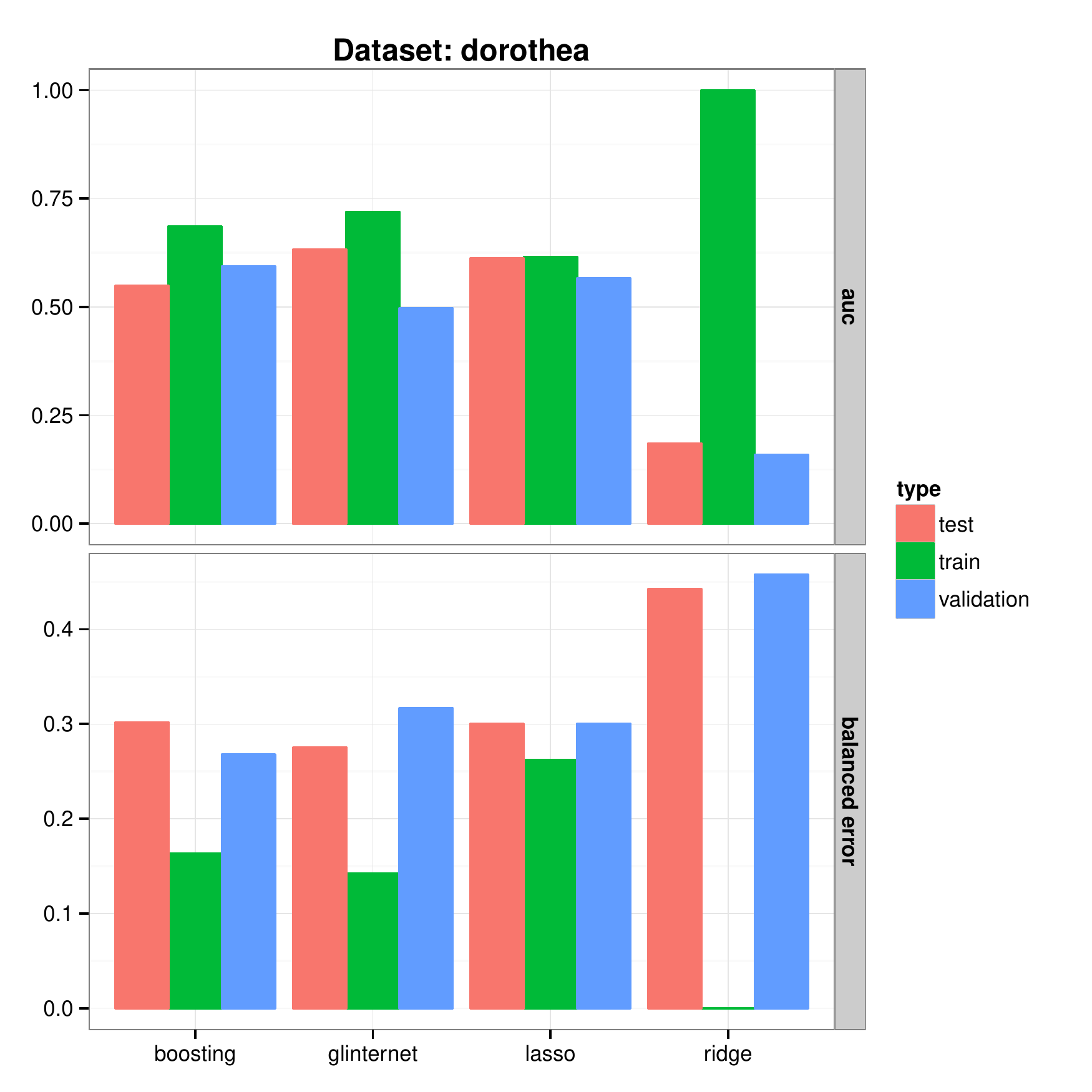}
      \end{center}
      \caption{Performance on dorothea}
      \label{fig:comparison_dorothea}
    \end{figure}
  \end{subsection}
  \begin{subsection}{Genome-wide association study}
    We use the simulated rheumatoid arthritis data (replicate 1) from Problem 3 in Genetic Analysis Workshop 15. Affliction status was determined by a genetic/environmental model to mimic the familial pattern of arthritis; full details can be found in \cite{Miller:2007:BMC}. The authors simulated a large population of nuclear families consisting of two parents and two offspring. We are then provided with 1500 randomly chosen families with an affected sibling pair (ASP), and 2,000 unaffected families as a control group. For the control families, we only have data from one randomly chosen sibling. Therefore we also sample one sibling from each of the 1,500 ASPs to obtain 1,500 cases.

    There are 9,187 single nucleotide polymorphism (SNP) markers on chromosomes 1 through 22 that are designed to mimic a 10K SNP chip set, and a dense set of 17,820 SNPs on chromosome 6 that approximate the density of a 300K SNP set. Since 210 of the SNPs on chromosome 6 are found in both the dense and non-dense sets, we made sure to include them only once in our analysis. This gives us a total of $9187-210+17820=26797$ SNPs, all of which are 3-level categorical variables.

    We are also provided with phenotype data, and we include sex, age, smoking history, and the DR alleles from father and mother in our analysis. Sex and smoking history are 2-level categorical variables, while age is continuous. Each DR allele is a 3-level categorical variable, and we combine the father and mother alleles in an unordered way to obtain a 6-level DR variable. In total, we have 26,801 variables and 3,500 training examples.

    We run \textsc{glinternet} (without screening) on a grid of values for $\lambda$ that starts with the empty model. The first two variables found are main effects:
    \begin{itemize}
    \item SNP6\_305
    \item denseSNP6\_6873
    \end{itemize}
    Following that, an interaction denseSNP6\_6881:denseSNP6\_6882 gets picked up. We now proceed to analyze this result.

    Two of the interactions listed in the answer sheet provided with the data are: locus A with DR, and locus C with DR. There is also a main effect from DR. The closest SNP to the given position of locus A is SNP16\_31 on chromosome 16, so we take this SNP to represent locus A. The given position for locus C corresponds to denseSNP6\_3437, and we use this SNP for locus C. While it looks like none of the true variables are to be found in the list above, these discovered variables have very strong association with the true variables.

    If we fit a linear logistic regression model with our first discovered pair denseSNP6\_6881:denseSNP6\_6882, the main effect denseSNP6\_6882 and the interaction terms are both significant:\\
    \begin{table}[H]
      \centering
      \begin{tabular}{rrrrr}
        \hline
        & Df & Dev & Resid. Dev & $\Pr(>\chi^2)$\\
        \hline
        NULL & & &4780.4&\\
        denseSNP6\_6881&1&1.61&4778.7&0.20386\\
        denseSNP6\_6882&2&1255.68&3523.1&$<$2e-16\\
        denseSNP6\_6881:denseSNP6\_6882&1&5.02&3518.0&0.02508\\
        \hline
      \end{tabular}
      \caption{Anova for linear model fitted to first interaction term that was discovered.}
      \label{tab:table1}
    \end{table}
    A $\chi^2$ test for independence between denseSNP6\_6882 and DR gives a p-value of less than 1e-15, so that \textsc{glinternet} has effectively selected an interaction with DR. However, denseSNP6\_6881 has little association with loci A and C. The question then arises as to why we did not find the true interactions with DR. To investigate, we fit a linear logistic regression model separately to each of the two true interaction pairs. In both cases, the main effect DR is significant (p-value $<1e-15$), but the interaction term is not:\\
    \begin{table}[H]
      \centering
      \begin{tabular}{rrrrr}
        \hline
        & Df & Dev & Resid. Dev & $\Pr(>\chi^2)$\\
        \hline
        NULL & & &4780.4&\\
        SNP16\_31&2&3.08&4777.3&0.2147\\
        DR&5&2383.39&2393.9&$<$2e-16\\
        SNP16\_31:DR&10&9.56&2384.3&0.4797\\
        \hline
        \hline
        NULL & & &4780.4&\\
        denseSNP6\_3437&2&1.30&4779.1&0.5223\\
        DR&5&2384.18&2394.9&$<$2e-16\\
        denseSNP6\_3437:DR&8&5.88&2389.0&0.6604\\
        \hline
      \end{tabular}
      \caption{Anova for linear logistic regression done separately on each of the two true interaction terms.}
      \label{tab:anova_two_true_interactions}
    \end{table}
    Therefore it is somewhat unsurprising that \textsc{glinternet} did not pick these interactions.

    This example also illustrates how the the group-lasso penalty in \textsc{glinternet} helps in discovering interactions (see Section \ref{sec:properties_of_the_glinternet_estimators}). We mentioned above that denseSNP6\_6881:denseSNP6\_6882 is significant if fit by itself in a linear logistic model (Table \ref{tab:table1}). But if we now fit this interaction \textit{in the presence} of the two main effects SNP6\_305 and denseSNP6\_6873, it is \textit{not} significant:\\
    \begin{table}[H]
      \centering
      \begin{tabular}{rrrrr}
        \hline
        & Df & Dev  & Resid. Dev & $\Pr(>\chi^2)$\\
        \hline
        NULL & & &4780.4&\\
        SNP6\_305&2&2140.18&2640.2&$<$2e-16\\
        denseSNP6\_6873&2&382.61&2257.6&$<$2e-16\\
        denseSNP6\_6881:denseSNP6\_6882&4&3.06&2254.5&0.5473\\
        \hline
      \end{tabular}
    \end{table}
    This suggests that fitting the two main effects fully has explained away most of the effect from the interaction. But because \textsc{glinternet} regularizes the coefficients of these main effects, they are not fully fit, and this allows \textsc{glinternet} to discover the interaction.

    The anova analyses above suggest that the true interactions are difficult to find in this GWAS dataset. Despite having to search through a space of about 360 million interaction pairs, \textsc{glinternet} was able to find variables that are strongly associated with the truth. This illustrates the difficulty of the interaction-learning problem: even if the computational challenges are met, the statistical issues are perhaps the dominant factor.
  \end{subsection}
\end{section}

\begin{section}{Algorithm details}\label{sec:algorithm_details}
  We describe the algorithm used in \textsc{glinternet} for solving the group-lasso optimization problem. Since the algorithm applies to the group-lasso in general and not specifically for learning interactions, we will use $\mathbf{Y}$ as before to denote the $n$-vector of observed responses, but $\mathbf{X}=[\mathbf{X}_1\quad \mathbf{X}_2\quad\ldots\quad\mathbf{X}_p]$ will now denote a generic feature matrix whose columns fall into $p$ groups.
  \begin{subsection}{Defining the group penalties $\gamma$}\label{sec:defining_the_group_penalties}
    Recall that the group-lasso solves the optimization problem
    \begin{eqnarray}
      \argmin_\beta \mathcal{L}(\mathbf{Y},\mathbf{X};\beta) + \lambda\sum_{i=1}^p\gamma_i\|\beta_i\|_2,
    \end{eqnarray}
    where $\mathcal{L}(\mathbf{Y},\mathbf{X};\beta)$ is the negative log-likelihood function. This is given by
    \begin{eqnarray}
      \mathcal{L}(\mathbf{Y},\mathbf{X};\beta) = \frac{1}{2n}\left\|\mathbf{Y}-\mathbf{X}\beta\right\|_2^2
    \end{eqnarray}
    for squared error loss, and
    \begin{eqnarray}
      \mathcal{L}(\mathbf{Y},\mathbf{X},\beta) = -\frac{1}{n}\left[\mathbf{Y}^T(\mathbf{X}\beta) - \ind^T\log(\ind+\exp(\mathbf{X}\beta))\right]
    \end{eqnarray}
    for logistic loss (log and exp are taken component-wise). Each $\beta_i$ is a vector of coefficients for group $i$. When each group consists of only one variable, this reduces to the lasso.

    The $\gamma_i$ allow us to penalize some groups more (or less) than others. We want to choose the $\gamma_i$ so that if the signal were pure noise, then all the groups are equally likely to be nonzero. Because the quantity $\|\mathbf{X}_i^T(\mathbf{Y}-\hat{\mathbf{Y}})\|_2$ determines whether the group $\mathbf{X}_i$ is zero or not (see the KKT conditions (\ref{eq:kkt_conditions})), we define $\gamma_i$ via a null model as follows. Let $\epsilon \sim (0, I)$. Then we have
    \begin{eqnarray}
      \gamma_i^2 & = & \E\|\mathbf{X}_i^T\epsilon\|_2^2\\
      & = & \tr \mathbf{X}_i^T\mathbf{X}_i\\
      & = & \|\mathbf{X}_i\|_F^2.
    \end{eqnarray}
    Therefore we take $\gamma_i=\|\mathbf{X}_i\|_F$, the frobenius norm of the matrix $\mathbf{X}_i$. In the case where the $\mathbf{X}_i$ are orthonormal matrices with $p_i$ columns, we recover $\gamma_i=\sqrt{p_i}$, which is the value proposed in \cite{Yuan:2006:JRSS}. In our case, the indicator matrices for categorical variables all have frobenius norm equal to $\sqrt{n}$, so we can simply take $\gamma_i=1$ for all $i$. The case where continuous variables are present is not as straightforward, but we can normalize all the groups to have frobenius norm one, which then allows us to take $\gamma_i=1$ for $i=1,\ldots,p$.
  \end{subsection}

  \begin{subsection}{Fitting the group-lasso}\label{sec:fitting_the_group_lasso}
    Fast iterative soft thresholding (FISTA) \cite{Beck:2009:FIS} is a popular approach for computing the lasso estimates. This is essentially a first order method with Nesterov style acceleration through the use of a momentum factor. Because the group-lasso can be viewed as a more general version of the lasso, it is unsurprising that FISTA can be adapted for the group-lasso with minimal changes. This gives us important advantages:
    \begin{enumerate}
    \item FISTA is a generalized gradient method, so that there is no Hessian involved
    \item virtually no change to the algorithm when going from squared error loss to logistic loss
    \item gradient computation and parameter updates can be parallelized
    \item can take advantage of adaptive momentum restart heuristics.
    \end{enumerate}
    Adaptive momentum restart was introduced in \cite{Candes:2012:FCM} as a scheme to counter the ``rippling'' behaviour often observed with accelerated gradient methods. They demonstrated that adaptively restarting the momentum factor based on a gradient condition can dramatically speed up the convergence rate of FISTA. The intuition is that we should reset the momentum to zero whenever the gradient at the current step and the momentum point in different directions. Because the restart condition only requires a vector multiplication with the gradient (which has already been computed), the added computational cost is negligible. The FISTA algorithm with adaptive restart is given below.
    
    \begin{algorithm}
      \SetKwInOut{Input}{input}\SetKwInOut{Output}{output}
      \caption{FISTA with adaptive restart}
      \label{algorithm:fista}
      \Input{Initialized parameters $\beta^{(0)}$, feature matrix $\mathbf{X}$, observations $\mathbf{Y}$, regularization parameter $\lambda$, step size $s$.}
      \Output{$\hat{\beta}$}
      \BlankLine
      Initialize $x^{(0)}=\beta^{(0)}$ and $\rho_0=1$.
      \BlankLine
      \For{$k=0,1,\ldots,$}{
        $g^{(k)} = -\mathbf{X}^T(\mathbf{Y}-\mathbf{X}\beta^{(k)})$\;
        $x^{(k+1)} = \left(\ind-\frac{s\lambda}{\|\beta^{(k)}-sg^{(k)}\|_2}\right)_+\left(\beta^{(k)}-sg^{(k)}\right)$\;
        $\rho_k = (\beta^{(k)}-x^{(k+1)})^T(x^{(k+1)}-x^{(k)}) > 0$ ? 1 : $\rho_k$\;
        $\rho_{k+1} = (1+\sqrt{1+4\rho_k^2})/2$\;
        $\beta^{(k+1)} = x^{(k+1)} + \frac{\rho_k-1}{\rho_{k+1}}(x^{(k+1)}-x^{(k)})$\;
      }
    \end{algorithm} 
    At each iteration, we take a step in the direction of the gradient with step size $s$. We can get an idea of what $s$ should be by looking at the majorized objective function about a fixed point $\beta_0$:
    \begin{eqnarray}
      \label{eq:majorized_objective}
      M(\beta) = \mathcal{L}(\mathbf{Y},\mathbf{X};\beta_0) + (\beta-\beta_0)^Tg(\beta_0) + \frac{1}{2s}\|\beta-\beta_0\|_2^2 + \lambda\sum_{i=1}^p\|\beta_i\|_2.
    \end{eqnarray}
    Here, $g(\beta_0)$ is the gradient of the negative log-likelihood $\mathcal{L}(\mathbf{Y},\mathbf{X};\beta)$ evaluated at $\beta_0$. Majorization-minimization schemes for convex optimization choose $s$ sufficiently small so that the LHS of (\ref{eq:majorized_objective}) is upper bounded by the RHS. One strategy is to start with a large step size, then backtrack until this condition is satisfied. We use an approach that is mentioned in \cite{Candes:2011:MPC} that adaptively initializes the step size with
    \begin{eqnarray}
      \label{eq:step_size_initialization}
      s = \frac{\|\beta^{(k)}-\beta^{(k-1)}\|_2}{\|g_k-g_{k-1}\|_2}.
    \end{eqnarray}
    We then backtrack from this initialized value if necessary by multiplying $s$ with some $0<\alpha<1$. The interested reader may refer to \cite{Simon:2013:JCGS} for more details about majorization-minimization schemes for the group-lasso.
  \end{subsection}
\end{section}

\begin{section}{Discussion}\label{sec:discussion}
  We introduced \textsc{glinternet}, a method for learning linear interaction models that satisfy strong hierarchy. We demonstrated that the method is comparable with past approaches, but has the added advantage of being able to accomodate both categorical and continuous variables on a larger scale. We illustrated the method with several examples using real and simulated data, and also showed that \textsc{glinternet} can be applied to genome wide association studies.

  \textsc{glinternet} is available on CRAN as a package for the statistical software R.  
\end{section}

\newpage
\bibliographystyle{apalike}
\bibliography{bib}

\end{document}